\documentclass[11pt]{article}

\usepackage[margin=1in]{geometry}
\usepackage{amsmath,amssymb,amsthm} \usepackage{hhline}
\usepackage{bm}

\newtheorem{theorem}{Theorem}

\newtheorem{lemma}[theorem]{Lemma}
\newtheorem{proposition}[theorem]{Proposition}

\newtheorem*{theorem*}{Theorem} \newtheorem*{lemma*}{Lemma}

\theoremstyle{remark} \newtheorem*{remark*}{Remark}

\usepackage{todonotes}
\usepackage{hyperref}

\newcommand{\abs}[1]{\left\vert#1\right\vert}
\newcommand{\set}[1]{\left\{#1\right\}}
\newcommand{\tuple}[1]{\left(#1\right)} \newcommand{\eps}{\varepsilon}
 
\newcommand{\defeq}{\triangleq}

 \renewcommand{\mid}{\;\middle\vert\;}

\usepackage{xifthen}

\renewcommand{\Pr}[2][]{ \ifthenelse{\isempty{#1}}
  {\mathbf{Pr}\left[#2\right]} {\mathbf{Pr}_{#1}\left[#2\right]} }
\newcommand{\E}[2][]{ \ifthenelse{\isempty{#1}}
  {\mathop{\mathbf{E}}\left[#2\right]}
  {\mathop{\mathbf{E}}_{#1}\left[#2\right]} }

\begin{document}

\title{FPTAS for Hardcore and Ising Models on Hypergraphs }

\author{ Pinyan Lu\thanks{Microsoft Research. {\tt
      pinyanl@microsoft.com}} \and Kuan Yang\thanks{Shanghai Jiao Tong
    University. {\tt peanutyk@gmail.com}} \and Chihao
  Zhang\thanks{Shanghai Jiao Tong University. {\tt
      chihao.zhang@gmail.com}} } \date{}

\maketitle
\thispagestyle{empty}

\begin{abstract}
  Hardcore and Ising models are two most important families of two
  state spin systems in statistic physics. Partition function of spin
  systems is the center concept in statistic physics which connects
  microscopic particles and their interactions with their macroscopic
  and statistical properties of materials such as energy, entropy,
  ferromagnetism, etc. If each local interaction of the system
  involves only two particles, the system can be described by a
  graph. In this case, fully polynomial-time approximation scheme
  (FPTAS) for computing the partition function of both hardcore and
  anti-ferromagnetic Ising model was designed up to the uniqueness
  condition of the system. These result are the best possible since
  approximately computing the partition function beyond this threshold
  is NP-hard. In this paper, we generalize these results to general
  physics systems, where each local interaction may involves multiple
  particles. Such systems are described by hypergraphs. For hardcore
  model, we also provide FPTAS up to the uniqueness condition, and for
  anti-ferromagnetic Ising model, we obtain FPTAS where a slightly
  stronger condition holds.
\end{abstract}

\newpage
\setcounter{page}{1}

\section{Introduction}
In recent couple of years, there are remarkable progress on designing
approximate counting algorithms based on correlation decay
approach~\cite{Weitz06,BG08,GK07,RSTVY11,LLY12,SST,LLY13,LY13,counting-edge-cover,fibo-approx,LiuLZ14,LiuL15,LiuL15b}.
Unlike the previous major approximate counting approach that based on
random sampling such as Markov Chain Monte Carlo (MCMC) (see for
examples~\cite{MC_JA96,JS93, app_JSV04, app_GJ11, app_DJV01,
  col_Jerrum95, col_Vigoda99, IS_DFJ02, IS_DG00, IS_LV97}),
correlation decay based approach provides deterministic fully
polynomial-time approximation scheme (FPTAS).  New FPTASes were
designed for a number of interesting combinatorial counting problems
and computing partition functions for statistic physics systems, where partition
function is a weighted counting function from the computational point of
view. One most successful example is the algorithm for
anti-ferromagnetic two-spin systems~\cite{LLY12,SST,LLY13}, including
counting independent sets~\cite{Weitz06}. The correlation decay based
FPTAS is beyond the best known MCMC based FPRAS and achieves the
boundary of approximability~\cite{SS12,galanis2012inapproximability}.

In this paper, we generalize these results of anti-ferromagnetic
two-spin systems to hypergraphs. For physics point of view, this
corresponds to spin systems with higher order interactions, where each local interaction
involves more than two particles. There are two main ingredients for
the original algorithms and analysis on normal graphs (we will use the
term normal graph for a graph to emphasize that it is not hypergraphs):
(1) the construction of the self-avoiding walk tree by
Weitz~\cite{Weitz06}, which transform a general graph to a tree;
(2) correlation decay proof for the tree, which enables one to truncate
the tree to get a good approximation in polynomial time.  However, the
construction of the self-avoiding walk tree cannot be extended to
hypergraphs, which is the main obstacle for the generalization.

The most related previous work is counting independent sets for
hypergraphs by Liu and Lu~\cite{LiuL15}. They established a
computation tree with a two-layers recursive function instead of the
self-avoiding walk tree and provided a FPTAS to count the number of
independent sets for hypergraphs with maximum degree of 5, extending
the algorithm for normal graph with the same degree bound. Their proof
was significantly more complicated than the previous one due to the
complication of the two-layers recursive function. In particular, the
``right" degree bound for the problem is a real number between $5$ and
$6$ if one allow fraction degree in some sense. This integer gap
provides some room of flexibility and enables them to do some
case-by-case numerical argument to complete the proof. However, the
parameters for the anti-ferromagnetic two-spin systems on hypergraphs are
real numbers. To get a sharp threshold, we do not have any room for
numerical approximation.

\subsection{Our results}
We study two most important anti-ferromagnetic two-spin systems on
hypergraphs: the hardcore model and the anti-ferromagnetic Ising
model. The formal definitions of these two models can be found in
Section \ref{sec:prelim}.

Our first result is an FPTAS to compute the partition function of
hypergraph hardcore model.

\begin{theorem}\label{thm:ind-main}
  For hardcore model with a constant activity parameter of $\lambda$,
  there is an FPTAS to compute the partition function for hypergraphs
  with maximum degree $\Delta \geq 2 $ if $\lambda <
  \frac{(\Delta - 1)^{\Delta - 1}}{(\Delta - 2)^{\Delta}}$.
\end{theorem}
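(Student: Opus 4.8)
The plan is to adapt the correlation-decay method: since Weitz's self-avoiding-walk tree has no hypergraph analogue, I would instead set up a computation tree carrying a two-layer recursion, in the spirit of Liu--Lu~\cite{LiuL15}. Fix a hypergraph $H$ of maximum degree $\Delta$ and a vertex $v$, and let $P_v = \Pr{v \text{ unoccupied}}$ under the Gibbs distribution. Conditioning $v$ to be unoccupied deletes $v$ and every hyperedge through it, while conditioning it to be occupied deletes $v$ and shrinks each hyperedge $e \ni v$ to $e \setminus \{v\}$; expressing the partition function of the ``occupied'' residual instance through the ``unoccupied'' one by splitting off the new constraints yields $P_v = \frac{1}{1 + \lambda \prod_{e \ni v}\bigl(1 - \prod_{u \in e \setminus \{v\}}(1 - P_u)\bigr)}$, in which each $P_u$ is the marginal of $u$ after $v$ and all hyperedges through it are deleted, so $u$ has lost at least the hyperedge $e$ and has ``effective degree'' at most $\Delta - 1$. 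Unrolling this recursion — vertex layers alternating with hyperedge layers, pinnings bookkept exactly as in~\cite{LiuL15} — produces a computation tree whose root value is $P_v$, and truncating it at depth $\ell$ amounts to replacing the depth-$\ell$ marginals by an arbitrary admissible value.

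Since the double product above never exceeds $1$, every tree marginal lies in $I_\lambda := [\tfrac{1}{1+\lambda},\,1)$, and this a priori bound is essential. I would then choose a potential function $\Phi$ on $I_\lambda$ and bound the amortized Jacobian $\sum_{u}\frac{\abs{\Phi'(P_v)}}{\abs{\Phi'(P_u)}}\,\abs{\frac{\partial P_v}{\partial P_u}}$, the sum running over the grandchildren $u$ of $v$. A short calculation rewrites this contribution as $\abs{\Phi'(P_v)}\,P_v(1-P_v)\sum_{e\ni v}\frac{1-w_e}{w_e}\sum_{u\in e\setminus\{v\}}\frac{1}{\abs{\Phi'(P_u)}(1-P_u)}$, where $w_e := 1 - \prod_{u\in e\setminus\{v\}}(1-P_u)$, and the aim is that it stays strictly below $1$ whenever $\lambda < \lambda_c(\Delta) := \frac{(\Delta-1)^{\Delta-1}}{(\Delta-2)^{\Delta}}$ — the classical hardcore uniqueness threshold of the $(\Delta-1)$-ary tree. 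The obvious candidate $\Phi'(x) = 1/\sqrt{x(1-x)}$ (the normal-graph potential) does \emph{not} suffice here: it already overshoots for $\Delta = 3$ when a few size-$3$ hyperedges run into leaves. So part of the work is to design a potential tailored to the restricted domain $I_\lambda$ and to the two-layer structure, with appropriate blow-up at the endpoints of $I_\lambda$.

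The heart of the argument — and the step I expect to be the main obstacle — is the extremal problem this produces: an adversary picks the number of incident hyperedges (at most $\Delta-1$), their sizes, and the grandchild values in $I_\lambda$ so as to maximize the quantity above. That the hyperedge sizes are unbounded a priori is the genuinely new difficulty; what tames it is precisely $P_u \ge \tfrac{1}{1+\lambda}$, since then $\prod_{u\in e\setminus\{v\}}(1-P_u) \le \bigl(\tfrac{\lambda}{1+\lambda}\bigr)^{|e|-1}$ forces large hyperedges into the weak-interaction regime (with $w_e$ close to $1$). I would aim to show the maximum is attained at a configuration whose value equals that of the size-$2$ (normal-graph) recursion on the $(\Delta-1)$-ary tree, so that the threshold comes out exactly $\lambda_c(\Delta)$: first optimize within a single hyperedge for a fixed value of $w_e$, collapsing it to a one-variable expression, then optimize jointly over the $w_e$ of the $\le\Delta-1$ hyperedges. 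Pushing both optimizations through while keeping the estimate tight at $\lambda = \lambda_c(\Delta)$ — where, as the introduction stresses, no numerical slack is available — is the delicate part, and it is there that the exact form of $\Phi$ must be fixed; the few degenerate (fully pinned) configurations are handled separately and are easy.

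Once a contraction factor $\alpha < 1$ is established, the rest is routine. The dependence of the root marginal on the depth-$\ell$ boundary of the computation tree decays like $\alpha^{\ell}$, up to the bounded oscillation of $\Phi$ on $I_\lambda$, so truncating at depth $\ell = O(\log(n/\eps))$ computes $P_v$ to within a multiplicative factor $1\pm\eps/n$ in time $(\Delta\,k_{\max})^{O(\ell)}$, which is $\mathrm{poly}(n/\eps)$ when the degree $\Delta$ and the maximum hyperedge size $k_{\max}$ are constants (and very large hyperedges can anyway be pruned, as they barely affect $P_v$). Finally, fixing a vertex order and telescoping $Z(H) = \prod_i 1/\Pr[H_{i-1}]{v_i \text{ unoccupied}}$, where $H_{i-1}$ is $H$ with $v_1,\dots,v_{i-1}$ and all their incident hyperedges removed (a hypergraph of degree $\le\Delta$ in which $\Pr[H_{i-1}]{v_i \text{ unoccupied}} \ge \tfrac{1}{1+\lambda}$), and multiplying the $n$ approximate marginals, gives a $(1\pm\eps)$-approximation of the partition function — an FPTAS.
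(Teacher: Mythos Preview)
Your overall plan---two-layer computation tree, potential-function contraction, reduction of the extremal problem to the normal-graph case, then telescoping---is exactly the paper's route. Two corrections are needed, one minor and one substantive.

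The minor one: no new potential has to be designed. Working in the ratio coordinate $R=(1-P)/P$, the paper uses the standard normal-graph potential $\Phi(R)=1/\sqrt{R(1+R)}$ unchanged (in your coordinate this is $1/(P\sqrt{1-P})$, not $1/\sqrt{P(1-P)}$), and it suffices. The reduction to size-$2$ edges is then a clean monotonicity argument rather than the ad hoc optimization you sketch: a Karamata/majorization step symmetrizes the values inside each hyperedge, and after the substitution $z_i=1-\bigl(\lambda/(1+\lambda)\bigr)^{w_i-1}t_i$ the amortized decay rate is shown to be \emph{decreasing} in every $w_i$, so $w_i=1$ is extremal and one lands on the known normal-graph bound.

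The substantive gap is your treatment of unbounded hyperedge sizes. The running time $(\Delta\,k_{\max})^{O(\ell)}$ is polynomial only for bounded $k_{\max}$, and ``very large hyperedges can anyway be pruned'' is a hand-wave, not an argument. The paper's device is to charge $\lfloor 1+c\log_{1/\alpha}w\rfloor$ units of depth to a step through a size-$w$ hyperedge and, correspondingly, to prove the \emph{weighted} contraction
\[
\sum_{a=1}^{d} w_a^{\,c}\sum_{b=1}^{w_a}\frac{\Phi(f)}{\Phi(x_{ab})}\left|\frac{\partial f}{\partial x_{ab}}\right|\;\le\;\alpha<1
\]
for a small constant $c>0$. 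This single smooth bound delivers contraction and a polynomial-size truncated tree simultaneously (the $w_a^{\,c}$ factor exactly pays for the extra branching through a large edge), and---this is the crucial point---it is compatible with the monotonicity-in-$w_i$ argument above. The paper explicitly notes that the traditional large/small case split your ``pruning'' remark gestures at is \emph{not} compatible with that monotonicity step; without the $w^{c}$ modification you would not reach the tight threshold $\lambda_c(\Delta)=\frac{(\Delta-1)^{\Delta-1}}{(\Delta-2)^{\Delta}}$. This is the main idea missing from your proposal.
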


This bound is exactly the uniqueness threshold for the hardcore on
normal graphs. Thus, it is tight since normal graphs are special cases
of hypergraphs. To approximately compute the partition function beyond
this threshold is NP-hard.  In particular, The FPTAS in~\cite{LiuL15}
for counting the number of independent sets for hypergraphs with
maximum degree of $5$ can be viewed as a special case of our result
with parameters $\Delta=5, \lambda=1$, which satisfies the above
uniqueness condition. Another interesting special case is when
$\Delta=2$. This is not an interesting case for normal graphs since a
normal graph with maximum degree of $2$ is simply a disjoint union of
paths and cycles, whose partition function can be computed
exactly. However, the problem becomes more complicated on hypergraphs:
it can be interpreted as counting weighted edge covers on normal
graphs by viewing vertices of degree two as edges and hyperedges as
vertices. The exact counting of this problem is known to be
\#P-complete and an FPTAS was found recently~\cite{LiuLZ14}. In
our model, the uniqueness bound $\frac{(\Delta - 1)^{\Delta -
    1}}{(\Delta -2)^{\Delta}}$ is infinite for $\Delta=2$ and as a
result we give an FPTAS for counting weighted edge covers for any
constant edge weight $\lambda$. This gives an alternative proof for
the main result in~\cite{LiuLZ14}.

Our second result is on computing the partition function of
anti-ferromagnetic Ising model.

\begin{theorem}\label{thm:ising-main}
  For Ising model with interaction parameter $0 < \beta < 1$ and
  external field $\lambda$, there is an FPTAS to compute the partition
  function for hypergraphs with maximum degree $\Delta$ if $\beta \ge
  1 - \frac{2}{2e^{-1/2}\Delta+3}$.
\end{theorem}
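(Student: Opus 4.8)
The plan is to transplant the correlation-decay framework of Weitz~\cite{Weitz06}, in the hypergraph form introduced by Liu and Lu~\cite{LiuL15}, from the hardcore to the Ising model. Since the self-avoiding walk tree has no hypergraph analogue, the object to analyse is a two-layer recursion on conditional marginal ratios. For a vertex $v$ in a sub-hypergraph write $R_v=\Pr{\sigma_v=1}/\Pr{\sigma_v=0}$. Pinning $v$ and treating the incident hyperedges $e_1,\dots,e_d$ ($d\le\Delta$) as decoupled --- exact on tree-like instances --- yields the \emph{outer} recursion $R_v=\lambda\prod_{i=1}^d F(e_i)$, where $F(e_i)$ is the ratio of the partition sums over $e_i\setminus\{v\}$ conditioned on $\sigma_v=1$ versus $\sigma_v=0$. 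Expanding the monochromatic weight of a hyperedge $e=\{v,u_1,\dots,u_{k-1}\}$ (per the model of Section~\ref{sec:prelim}, a hyperedge carries a nontrivial weight only when monochromatic) gives $F(e)=\bigl(1-(1-\beta)\prod_j q_j\bigr)/\bigl(1-(1-\beta)\prod_j(1-q_j)\bigr)$, where $q_j=R_{u_j}/(1+R_{u_j})$ is the marginal of $u_j$ in its own sub-instance with $e$ deleted. This is the \emph{inner} recursion, evaluated by peeling $u_1,\dots,u_{k-1}$ one at a time while maintaining the running products $\prod_{j\le t}q_j$ and $\prod_{j\le t}(1-q_j)$. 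Iterating the two layers builds the computation tree whose leaves receive arbitrary boundary values; the whole task is to bound the root's sensitivity to the leaves.

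For the decay I would use the potential (message-distortion) method. Introduce an increasing map $\Phi$ for the outer ratios --- I expect the right choice to be essentially the optimal anti-ferromagnetic-Ising potential already used on normal graphs, whose derivative carries a factor $1/\sqrt{\cdot}$ --- together, if needed, with a companion potential for the inner running products, and bound the weighted $\ell_1$-norm of the derivative of the composed update in $\Phi$-coordinates. The amortized contraction factor at a degree-$d$ vertex splits as a sum over its $d\le\Delta$ incident hyperedges; within one hyperedge, summing the sensitivities over the peeled vertices and maximizing over the boundary marginals $q_j$ reduces everything to a scalar optimization. The extremal configuration is the symmetric one $q_1=\dots=q_{k-1}=x$, and the $1/\sqrt{\cdot}$ in $\Phi$ effectively halves the arity exponent, so that maximizing over $x$ leaves a factor $\bigl(\tfrac{k-1}{k}\bigr)^{k/2}$ with $\sup_{k\ge 2}\bigl(\tfrac{k-1}{k}\bigr)^{k/2}=e^{-1/2}$. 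Thus each hyperedge contributes at most $\approx e^{-1/2}(1-\beta)$ and the contraction requirement rearranges to $\tfrac{1}{1-\beta}\ge e^{-1/2}\Delta+\tfrac32$, i.e.\ exactly the hypothesis $\beta\ge 1-\tfrac{2}{2e^{-1/2}\Delta+3}$, the additive $\tfrac32$ accounting for the denominator terms and the external field.

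The step I expect to be the real obstacle is precisely this extremal estimate together with the choice of potential. Unlike the independent-set analysis of~\cite{LiuL15}, where the admissible degree lies strictly between $5$ and $6$ and a case-by-case numerical argument suffices, here $\beta$ and $\lambda$ are genuine real parameters and the target threshold is sharp, so the worst-case inequality must be established cleanly and \emph{uniformly in the unbounded arity $k$}; simultaneously the potential must be chosen so that the inner peeling recursion is non-expanding at each of its arbitrarily many steps while contributing only a bounded total to the outer contraction, and the relevant one-variable functions must be shown monotone on their ranges. (The hardcore model of Theorem~\ref{thm:ind-main} should fall out as the one-sided specialization $\beta\to0$, where only $\prod_j q_j$ survives in $F(e)$; there one can afford the optimal potential with no loss, which is why the full uniqueness threshold is reached, whereas the two-sided Ising recursion pays the extra $e^{-1/2}$.)

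With the contraction in hand the FPTAS follows in the standard way. Evaluate the computation tree truncated at depth $L=O(\log(n/\eps))$: although the tree is wide, the fact that the one-step distortions at any node sum to at most $\alpha<1$ gives total truncation error $O(\alpha^L)\le\eps/n$ at the root. To keep the running time polynomial despite unbounded hyperedge arity, prune every subtree whose accumulated distortion is already below $\eps/\mathrm{poly}(n)$; since the distortion masses at the children of any node sum to at most $\alpha$, only $\mathrm{poly}(n)/\eps$ nodes survive, which in particular bounds how many vertices of any hyperedge are ever visited (one stops peeling a hyperedge once its running product drops below the accuracy budget, as that product is non-increasing). Finally recover $Z$ from a telescoping product of $n$ conditional marginals, each computed to accuracy $\eps/n$, using that every $R_v$ stays in a bounded interval $[\lambda\beta^{\Delta},\lambda\beta^{-\Delta}]$ so that ratios and logarithms are well behaved.
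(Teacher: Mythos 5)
Your high-level plan (two-layer recursion on marginal ratios, potential/message-distortion analysis, a scalar extremal problem whose unbounded-arity limit produces $e^{-1/2}$) matches the paper's architecture, and the observation that the additive $3/2$ accounts for denominator terms is in the right spirit. But there is a genuine error in the inner recursion as you state it, and two further places where you are guessing rather than proving.

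The error: you write $F(e)=\bigl(1-(1-\beta)\prod_j q_j\bigr)/\bigl(1-(1-\beta)\prod_j(1-q_j)\bigr)$ with a \emph{single} set of marginals $q_j$ in both numerator and denominator, where $q_j$ is ``the marginal of $u_j$ in its own sub-instance with $e$ deleted.'' This is not correct: $\prod_j q_j$ is not $\Pr[\bigwedge_j \sigma(u_j)=1]$ because the $u_j$'s remain correlated through $G-e$. The correct decomposition is a telescoping product of \emph{conditional} marginals, and the conditioning is different in the two products: to compute $\Pr[\bigwedge_j \sigma(u_j)=1]$ one peels $u_j$ with the previously peeled vertices pinned to $1$, while for $\Pr[\bigwedge_j \sigma(u_j)=0]$ one pins them to $0$. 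This yields two distinct families of sub-instances $G^1_{ij}$ and $G^0_{ij}$ (agreeing only at $j=1$), i.e.\ roughly $2w_i-1$ recursive calls per hyperedge rather than $w_i$, and the paper's recursion
\[
R_v=\lambda\prod_{i=1}^d\frac{1-(1-\bm\gamma(e_i))\frac{R_{i1}^0}{1+R_{i1}^0}\prod_{j\ge2}\frac{R_{ij}^1}{1+R_{ij}^1}}{1-(1-\bm\beta(e_i))\frac{1}{1+R_{i1}^0}\prod_{j\ge2}\frac{1}{1+R_{ij}^0}}
\]
reflects this. Your version, with a shared $q_j$, is not an identity in general; without the correct identity the whole contraction analysis is attached to the wrong update map. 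Relatedly, the exactness of the decomposition is not a ``tree-like'' approximation: the vertex-copy plus conditioning construction gives an exact recursion on \emph{every} hypergraph, which is why a computation tree is obtainable at all.

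Two further differences worth flagging. (i) You anticipate the hardcore potential with derivative $\Phi(x)=1/\sqrt{x(1+x)}$; the paper instead uses $\Phi(x)=1/x$ (i.e.\ $\phi=\ln$) for the Ising side, and the extremal analysis in the paper (Lemma~\ref{lem:ising-bound-item}) is carried out with that choice --- your heuristic $\bigl(\frac{k-1}{k}\bigr)^{k/2}\to e^{-1/2}$ is suggestive but is not what the paper actually optimizes, and with the $1/\sqrt{\cdot}$ potential the bounded-range fact $R_v\in[\lambda\beta^\Delta,\lambda\beta^{-\Delta}]$ is not exploited in the same way. (ii) For polynomial time with unbounded hyperedge size you propose ad-hoc pruning once a subtree's distortion budget is exhausted; the paper instead folds a $w^c$ weight into the amortized decay rate (proving a \emph{single} uniform inequality in $w$) and lets the truncation depth shrink by $\lfloor 1+c\log_{1/\alpha}w\rfloor$ per step. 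This is cleaner and is precisely the ``uniform and smooth treatment'' the paper advertises as a contribution; your pruning argument might be made to work but is not an established step, and you have not shown that the running product is monotone along the actual (two-pinning) peeling or that the surviving node count is polynomial.

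So: right skeleton, but the central recursion identity you propose is false as written, and the potential and efficiency steps are conjectural and diverge from the paper's proof.
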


The tight uniqueness bound for anti-ferromagnetic Ising model on
normal graphs is $\beta \ge 1 - \frac{2}{\Delta}$. So, our bound is
in the same asymptotic order but a bit worse in the constant
coefficient as $2 e^{-1/2}\approx 1.213 >1$. Moreover, our result can
apply beyond Ising model to a larger family of anti-ferromagnetic
two-spin systems on hypergraphs.

\subsection{Our techniques}
We also use the correlation decay approach. Although the framework of
this method is standard, in many work along this line of research, new
tools and techniques are developed to make this relatively new
approach more powerful and widely applicable. This is indeed the case
for the current paper as well. We summarize the new techniques we
introduced here.

For hardcore model, we replace the numerical case-by-case analysis by
a monotone argument with respect to the edge size of the hypergraph
which shows that the normal graphs with edge size of $2$ is indeed the
worst cases. This gives a tight bound for hardcore model.

To handle hypergraph with unbounded edge sizes, we need to prove that
the decay rate is much smaller for edges of larger size. Such effect
is called computationally efficient correlation decay, which has been
used in many previous works to obtain FPTASes for systems with
unbounded degrees or edge sizes. In all those works, one sets a
threshold for the parameter and proves different types of bounds for
large and small ones separately. Such artificial separation gets a
discontinuous bound which adds some complications in the proof and
usually ends with a case-by-case discussion. In particular, this
separation is not compatible with the above monotone argument. To
overcome this, we propose a new uniform and smooth treatment for this
by modifying the decay rate by a polynomial function of the edge
size. After this modification, we only need to prove one single bound
which automatically provides computationally efficient correlation. We
believe that this idea is important and may find applications in other
related problems.

For the Ising model, the main difficulty is to get a computation tree
as a replacement of the self-avoiding walk tree. We proposed one,
which also works for general anti-ferromagnetic two-spin systems on
hypergraphs. However, unlike the case of the hardcore model, the
computation tree is not of perfect efficiency and this is the main
reason that the bound we achieve in Theorem \ref{thm:ising-main} is
not tight. To get the computationally efficient correlation decay, we
also use the above mentioned uniform and smooth treatment. We also
extend our result beyond Ising to a family of anti-ferromagnetic
two-spin systems on hypergraphs.

\subsection{Discussion and open problems}
One obvious open question is to close the gap for Ising model, or more
generally extend our work to anti-ferromagnetic two-spin systems on
hypergraphs with better parameters. However, it seems that it is
impossible to obtain a tight result in these models using the
computation tree proposed in this paper, due to its imperfectness. 
How to overcome this is an important open question. 

Even for the hardcore model, our result is tight only for the family
of all hypergraphs since the normal graphs are special cases. From
both physics and combinatorics point of view, it would be very
interesting to study the family of $w$-uniform hypergraphs where each
hyperedge is of the same size $w$. By our monotone argument, it is
plausible to conjecture that one can get better bound for larger $w$.
In particular, MCMC based approach does show that larger edge size
helps: for hypergraph independent set with maximum degree of $\Delta$
and minimum edge size $w$, an FPRAS for $w \ge 2 \Delta + 1$ was shown
in~\cite{RSA:RSA20204}.  However, their result is not tight. Can we
get a tight bounds in terms of $\Delta$ and $w$ by correlation decay
approach? The high level idea sounds promising, but there is an
obstacle to prove such result by our computation tree. To construct
the computation tree, we need to construct modified instances. In
these modified instances, the size of a hyperedge may decrease to as
low as $2$. Therefore, even if we start with $w$-uniform hypergraphs
or hypergraphs with minimum edge size of $w$, we may need to handle
the worst case of normal graphs during the analysis. How to avoid this
effect is a major open question whose solution may have applications
in many other problems.

The fact that larger hyperedge size only makes the problem easier is not universally true for approximation counting. One interesting example is counting hypergraph matchings. FPTAS for counting 3D matchings of hypergraphs with maximum degree $4$ is given in~\cite{LiuL15}, and extension to weighted setting are studied in~\cite{yin2015counting}. In particular, a uniqueness condition in this setting is defined in~\cite{yin2015counting}, and it is a very interesting open question whether this uniqueness condition is also the transition boundary for approximability. 

%

\section{Preliminaries}\label{sec:prelim}

A hypergraph $G(V,\mathcal{E})$ consists of a vertex set $V$ and a set
of hyperedges $\mathcal{E}\subseteq 2^V$. For every hyperedge $e\in
\mathcal{E}$ and vertex $v\in V$, we use $e-v$ to denote
$e\setminus\set{v}$ and use $e+v$ to denote $e\cup\set{v}$.

\subsection{Hypergraph hardcore model}

The hardcore model is parameterized by the activity parameter
$\lambda>0$.  Let $G(V,\mathcal{E})$ be a hypergraph. An independent
set of $G$ is a vertex set $I\subseteq V$ such that $e\not\subseteq I$
for every hyperedge $e\in \mathcal{E}$. We use $\mathcal{I}(G)$ to
denote the set of independent sets of $G$. The weight of an
independent set $I$ is defined as $w(I)\defeq\lambda^{\abs{I}}$. We
let $Z(G)$ denote the partition function of $G(V,\mathcal{E})$ in the
hardcore model, which is defined as
\[
Z(G)\defeq\sum_{I\in\mathcal{I}(G)}w(I).
\]
The weight of independent sets induces a Gibbs measure on $G$. For
every $I\in\mathcal{I}$, we use
\[
\Pr[G]{I}\defeq\frac{w(I)}{Z(G)}
\]
to denote the probability of obtaining $I$ if we sample according to
the Gibbs measure. For every $v\in V$, we use
\[
\Pr[G]{v\in I}\defeq \sum_{\substack{I\in\mathcal{I}(G)\\v\in
    I}}\Pr[G]{I}
\]
to denote the \emph{marginal probability} of $v$.


\subsection{Hypergraph two state spin model}

Now we give a formal definition to hypergraph two state spin systems. This model is parameterized by the external
field $\lambda>0$. An instance of the model is a labeled hypergraph
$G(V,\mathcal{E},(\bm\beta,\bm\gamma))$ where
$\bm\beta,\bm\gamma:\mathcal{E}\to\mathbb{R}$ are two labeling
functions that assign each edge $e\in\mathcal{E}$ two reals
$\bm\beta(e),\bm\gamma(e)$. A configuration on $G$ is an assignment
$\sigma:V\to\set{0,1}$ whose weight $w(\sigma)$ is defined as
\[
w(\sigma)\defeq \prod_{e\in\mathcal{E}}w(e,\sigma)\prod_{v\in
    V}w(v,\sigma)
\]
where for a hyperedge $e=\set{v_1,\dots,v_w}$
\[
w(e,\sigma)\defeq
\begin{cases}
\bm\beta(e) & \mbox{if }\sigma(v_1)=\sigma(v_2)=\dots=\sigma(v_w)=0\\
\bm\gamma(e) & \mbox{if }\sigma(v_1)=\sigma(v_2)=\dots=\sigma(v_w)=1\\
1 & \mbox{otherwise}
\end{cases}
\]
and for a vertex $v$,
\[
w(v,\sigma)\defeq
\begin{cases}
\lambda & \mbox{if }\sigma(v)=1\\
1 & \mbox{otherwise.}
\end{cases}
\]

The partition function of the instance is given by
\[
Z(G) = \sum_{\sigma\in\set{0,1}^V} w(\sigma).
\]
Similarly, the weight of configurations induces a Gibbs measure on
$G$. For every $\sigma\in\set{0,1}^V$, we use
\[
\Pr[G]{\sigma}\defeq\frac{w(I)}{Z(G)}
\]
to denote the probability of $\sigma$ in the measure. For every $v\in
V$, we use
\[
\Pr[G]{\sigma(v)=1}\defeq
\sum_{\substack{\sigma\in\set{0,1}^V\\\sigma(v)=1}}\Pr[G]{\sigma}
\]
to denote the \emph{marginal probability} of $v$.

The \emph{anti-ferromagnetic} Ising model is the special case that
$\beta\defeq\bm\beta(e)=\bm\gamma(e)\le 1$ for all
$e\in\mathcal{E}$. In this model, we call $\beta$ the
\emph{interaction parameter} of the model. The hardcore model
introduced in previous section is the special case that
$\bm\beta(e)=1$ and $\bm\gamma(e)=0$ for all $e\in\mathcal{E}$.

We give the whole proof to Theorem \ref{thm:ising-main} in appendix. More precisely,
we design an FPTAS for the more general two state spin system and establish the following theorem:

\begin{theorem}\label{thm:twostate-main}
    Consider a class of two state spin system with external field
    $\lambda$ such that each instance
    $G(V,\mathcal{E},(\bm\beta,\bm\gamma))$ in the class satisfies
    $1-\frac{2}{2e^{-1/2}\Delta+3}\le\bm\beta(e),\bm\gamma(e)\le 1$ where
    $\Delta$ is the maximum degree of $G$. There exists an FPTAS to
    compute the partition function for every instance in the class.
\end{theorem}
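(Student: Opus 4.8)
The plan is to run the correlation-decay program, using a new two-layer computation tree in place of Weitz's self-avoiding walk tree (which does not generalize to hypergraphs). The first step is the usual reduction from $Z(G)$ to a single marginal. Since $\bm\beta(e),\bm\gamma(e)\in(0,1]$ (positivity holds because $2e^{-1/2}\Delta+3>2$), every configuration has strictly positive weight, so $Z(G)>0$ and every marginal lies in $(0,1)$. Fixing an order $v_1,\dots,v_n$ on the vertices, a pinning $\sigma^{*}$, and letting $G_i$ be the instance obtained from $G$ by pinning $v_1,\dots,v_{i-1}$ to $\sigma^{*}$, self-reducibility gives $Z(G)=w(\sigma^{*})\prod_{i=1}^n\rho_i$, where $G_1=G$, $G_{n+1}$ is the fully pinned instance, and $\rho_i=Z(G_i)/Z(G_{i+1})$ is an explicit function of the marginal $\Pr[G_i]{\sigma(v_i)=1}$. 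Hence it suffices, for an arbitrary instance $G$ and vertex $v$, to approximate the ratio $R_v^G\defeq\Pr[G]{\sigma(v)=1}/\Pr[G]{\sigma(v)=0}$ up to a multiplicative factor $1\pm\eps/\mathrm{poly}(n)$ in time $\mathrm{poly}(n,1/\eps)$; taking $\eps/\mathrm{poly}(n)$ small enough and multiplying the $n$ estimates then yields a $(1\pm\eps)$-approximation of $Z(G)$.

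The second step is the computation tree for $R_v^G$. Because $v$ lies in several hyperedges and each hyperedge couples many vertices, the recursion has two layers. In the outer layer we process the hyperedges $e_1,\dots,e_d\ni v$ (with $d\le\Delta$) one at a time, conditioning on them successively; peeling off a hyperedge $e=\set{v,u_1,\dots,u_{k-1}}$ contributes a factor to $R_v^G$ that depends only on the two joint events ``$u_1=\cdots=u_{k-1}=0$'' and ``$u_1=\cdots=u_{k-1}=1$'' in the instance $G'$ obtained from $G$ by deleting $v$ and the hyperedges processed so far. In the inner layer we expand each joint event by the chain rule, $\Pr{u_1=\cdots=u_{k-1}=b}=\prod_j\Pr{u_j=b\mid u_1=\cdots=u_{j-1}=b}$, and each conditional marginal is again of the form $R_u^{G''}$ for a strictly smaller instance $G''$ (fewer free vertices). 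Unrolling yields a finite tree $T_v$ whose internal nodes carry such recursive calls and whose leaves are trivial base cases; evaluating $T_v$ bottom-up returns $R_v^G$ exactly.

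The third step — the crux — is to prove correlation decay along $T_v$. Choose a potential function $\varphi$ (a monotone change of variable on the ratios of the kind used for two-spin recursions) and show that the recursion map, read in the $\varphi$-coordinate, is a contraction whose Lipschitz constant along the edges leaving a node that processes a hyperedge of size $w$ \emph{decreases} with $w$ — this is the computationally efficient correlation decay needed because $T_v$ has unbounded branching when edge sizes are unbounded. The hypothesis $1-\frac{2}{2e^{-1/2}\Delta+3}\le\bm\beta(e),\bm\gamma(e)\le 1$ is exactly the regime where this estimate closes: $\bm\beta,\bm\gamma$ this close to $1$ make each hyperedge perturb a ratio by only $O(1/\Delta)$, and summing over the $\le\Delta$ incident hyperedges still leaves a margin. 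To obtain the bound uniformly over all hyperedge sizes I will not split into ``large $w$'' and ``small $w$'' cases; instead, following the idea advertised in the introduction, I multiply the decay rate by a fixed polynomial in the edge size, after which a single inequality simultaneously yields contraction and the extra per-level shrinkage that tames the width. The constant $2e^{-1/2}$ in the threshold is the outcome of a one-variable optimization over the worst-case edge size in this combined estimate. The main difficulty is that the computation tree is \emph{not} of perfect efficiency — the factor gained when a hyperedge is ``effectively removed'' is weaker than one would like — so $\varphi$ and the polynomial modification must be chosen carefully to absorb this loss; this is both the principal technical obstacle and the reason the threshold carries the spurious $2e^{-1/2}$ rather than matching the uniqueness bound $1-2/\Delta$.

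The last step assembles the algorithm. The contraction from Step 3 implies that truncating $T_v$ at depth $\ell=O(\log(n/\eps))$ and evaluating bottom-up (with an arbitrary value at the cut leaves) gives $\hat R_v$ with $\abs{\varphi(\hat R_v)-\varphi(R_v^G)}\le\eps/\mathrm{poly}(n)$, hence a $1\pm\eps/\mathrm{poly}(n)$ multiplicative estimate of $R_v^G$ (using that $R_v^G$ and $\hat R_v$ are bounded away from $0$ and $\infty$, again from $\bm\beta,\bm\gamma\in(0,1]$). One must also check that the truncated tree has only $\mathrm{poly}(n,1/\eps)$ nodes despite unbounded branching: this follows from the same efficient-correlation-decay bookkeeping, since a size-$w$ hyperedge contributes many children but a correspondingly smaller decay budget, so the number of nodes reachable within the relevant ``decay depth'' remains polynomial (equivalently, one truncates according to accumulated decay rather than combinatorial depth). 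Feeding these marginal estimates into the telescoping product of Step 1 gives the claimed FPTAS for $Z(G)$, and specializing to $\bm\beta(e)=\bm\gamma(e)=\beta$ for all $e$ recovers Theorem \ref{thm:ising-main}.
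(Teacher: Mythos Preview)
Your plan is correct and matches the paper's approach almost point for point: the same self-reducibility reduction to marginals, the same two-layer recursion (split $v$ across its $d\le\Delta$ incident hyperedges, then expand the joint all-$0$ / all-$1$ events inside each hyperedge via the chain rule, yielding the recursion of Lemma~\ref{lem:ising-recursion}), a potential-function contraction with the $w_i^c$ ``polynomial in the edge size'' modification to get computationally efficient decay in a single stroke, and the depth-budget truncation that keeps the tree polynomial. The paper's specific choices you left unspecified are the potential $\varphi=\log$ (i.e.\ $\Phi(x)=1/x$, not the $\frac{1}{\sqrt{x(1+x)}}$ used for hardcore) and the precise smaller instances $G^0_{ij},G^1_{ij}$ obtained by \emph{pinning} the earlier/later copies of $v$ (rather than merely ``deleting $v$ and the processed hyperedges''); with those in place your outline is exactly Lemmas~\ref{lem:ising-recursion}--\ref{lem:ising-bound-item} followed by the standard assembly.
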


Theorem \ref{thm:ising-main} then follows since it is a special case
of Theorem \ref{thm:twostate-main}.

Actually, the main idea of FPTAS design and proof for this model is similar to the idea we use to solve hypergraph hardcore model. However, the details of recursion function design and techniques for proof of correlation decay property are pretty different from that in hypergraph hardcore model, so we put the whole section in appendix.

\section{Hypergraph Hardcore Model}

\subsection{Recursion for computing marginal
  probability}\label{sec:ind-recursion}

We first fix some notations on graph modification specific to
hypergraph independent set. Let $G(V,\mathcal{E})$ be a hypergraph.
\begin{itemize}
\item For every $v\in V$, we denote $ G-v\defeq (V\setminus\set{v},
  \mathcal{E}') $ where $\mathcal{E}'\defeq\set{e\setminus\set{v}\mid
    e\in \mathcal{E}}$.
\item For every $e\in\mathcal{E}$, we denote $ G-e\defeq
  (V,\mathcal{E}\setminus\set{e}).  $
\item Let $x$ be a vertex or an edge and $y$ be a vertex or an edge,
  we denote $ G-x-y\defeq (G-x)-y.  $
\item Let $S=\set{v_1,\dots,v_k}\subseteq V$, we denote $ G-S\defeq
  G-v_1-v_2\cdots-v_k.  $
\item Let $\mathcal{F}=\set{e_1,\dots,e_k}\subseteq \mathcal{E}$, we
  denote $ G-\mathcal{F}\defeq G-e_1-e_2\cdots-e_k.$
\end{itemize}

Let $G(V,\mathcal{E})$ be a hypergraph and $v\in V$ be an arbitrary
vertex with degree $d$. Let $\set{e_1,\dots,e_d}$ be the set of
hyperedges incident to $v$ and for every $i\in[d]$, $e_i=\set{v}\cup
\set{v_{ij}\mid j\in [w_i]}$ consists of $w_i+1$ vertices.

We first define a graph $G'(V',\mathcal{E}')$, which is the graph
obtained from $G$ by replacing $v$ by $d$ copies of itself and each
$e_i$ contains a distinct copy. Formally,
$V'\defeq(V\setminus\set{v})\cup\set{v_1,\dots,v_d}$,
$\mathcal{E}'\defeq \set{e\in \mathcal{E}\mid v\not\in
  e}\cup\set{e_i-v+v_i\mid i\in[d]}$.

For every $i\in[d]$ and $j\in[w_i]$, we define a hypergraph
$G_{ij}(V_{ij},\mathcal{E}_{ij})$:
\[
G_{ij}\defeq G'-\set{v_k\mid i\le k\le d}-\set{e_k\mid 1\le k\le
  i}-\set{v_{ik}\mid 1\le k<j}.
\]



Let $R_v=\frac{\Pr[G]{v\in I}}{\Pr[G]{v\not\in I}}$ and
$R_{ij}=\frac{\Pr[G_{ij}]{v_{ij}\in I}}{\Pr[G_{ij}]{v_{ij}\not\in
    I}}$. We can compute $R_v$ by following recursion:
\begin{lemma}\label{lem:ind-recursion}
  \begin{equation}
    \label{eq:ind-recursion}
    R_v=\lambda\prod_{i=1}^d\tuple{1-\prod_{j=1}^{w_i}\frac{R_{ij}}{1+R_{ij}}}.
  \end{equation}
\end{lemma}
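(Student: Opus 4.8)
The plan is to write $R_v$ as a ratio of partition functions, telescope the numerator over the $d$ hyperedges incident to $v$, and then recognize each resulting factor through the chain rule applied to the vertices sitting inside that hyperedge.

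\textbf{Reducing to partition functions and telescoping.} First I would split $\mathcal{I}(G)$ according to whether a configuration contains $v$. If $v\in I$, then $I\in\mathcal{I}(G)$ iff $I\setminus\set{v}\in\mathcal{I}(G-v)$ — deleting $v$ shrinks each incident $e_i$ to $e_i-v$, which encodes exactly the constraint $e_i\not\subseteq I$ — so $\sum_{v\in I}w(I)=\lambda\,Z(G-v)$. If $v\notin I$, every incident hyperedge is vacuously satisfied, so $I\in\mathcal{I}(G)$ iff $I$ is an independent set of the hypergraph $\hat G$ obtained from $G$ by deleting $v$ together with all $d$ hyperedges incident to it, giving $\sum_{v\notin I}w(I)=Z(\hat G)$. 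Hence $R_v=\lambda\,Z(G-v)/Z(\hat G)$. Now put $H_0\defeq G-v$ and $H_i\defeq H_{i-1}-(e_i-v)$ for $i\in[d]$, so that $H_d=\hat G$, and telescope: $R_v=\lambda\prod_{i=1}^{d}Z(H_{i-1})/Z(H_i)$. Since $\mathcal{I}(H_i)$ is the disjoint union of $\mathcal{I}(H_{i-1})$ and $\set{I\in\mathcal{I}(H_i):e_i-v\subseteq I}$, we get $Z(H_{i-1})/Z(H_i)=1-\Pr[H_i]{e_i-v\subseteq I}$.

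\textbf{Chain rule and identification with $R_{ij}$.} Writing $e_i-v=\set{v_{i1},\dots,v_{iw_i}}$ and using the chain rule, $\Pr[H_i]{e_i-v\subseteq I}=\prod_{j=1}^{w_i}\Pr[H_i]{v_{ij}\in I\mid v_{i1},\dots,v_{i,j-1}\in I}$. Conditioning on a vertex set $S$ all lying in $I$ replaces each hyperedge $f$ by $f\setminus S$, i.e.\ it coincides with deleting $S$; an easy count then gives $\Pr[H_i]{v_{ij}\in I\mid v_{i1},\dots,v_{i,j-1}\in I}=\lambda\,Z(K-v_{ij})/Z(K)$ with $K\defeq H_i-\set{v_{i1},\dots,v_{i,j-1}}$. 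Both $K$ and $G_{ij}$ arise from $G$ by deleting $v$, the hyperedges $e_1,\dots,e_i$ and the vertices $v_{i1},\dots,v_{i,j-1}$; the only difference is that $G_{ij}$, being built from the vertex-splitting graph $G'$, carries in addition the isolated vertices $v_1,\dots,v_{i-1}$. Each isolated vertex multiplies the partition function by $1+\lambda$, so the factor $(1+\lambda)^{i-1}$ is common to $Z(G_{ij})$ and $Z(G_{ij}-v_{ij})$ and cancels; therefore $\Pr[H_i]{v_{ij}\in I\mid v_{i1},\dots,v_{i,j-1}\in I}=\lambda\,Z(G_{ij}-v_{ij})/Z(G_{ij})=\Pr[G_{ij}]{v_{ij}\in I}=R_{ij}/(1+R_{ij})$. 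Plugging this into the telescoped product yields \eqref{eq:ind-recursion}.

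\textbf{Main obstacle.} The conceptual steps are short; the genuinely delicate part is the last one, namely lining up the recursively-defined instances $G_{ij}$ (which are phrased through $G'$) with the instances produced by conditioning, and verifying that the spurious isolated vertices $v_1,\dots,v_{i-1}$ contribute only a multiplicative constant that cancels in the marginal. It is cleanest to first reduce to the case where every hyperedge has size at least $2$ and no hyperedge is contained in another (both reductions leave $Z$ and all marginals unchanged), so that none of the vertex/hyperedge deletions above ever creates an empty hyperedge and all the set manipulations behave as expected.
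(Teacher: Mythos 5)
Your proof is correct. It covers the same logical ground as the paper's -- telescope the odds ratio over the $d$ hyperedges incident to $v$, factor each hyperedge's contribution over its remaining vertices, and recognize the conditional marginals as the $R_{ij}$ -- but it organizes the bookkeeping differently. The paper introduces the vertex-split instance $G'$ at the outset and telescopes over pinnings of the copies $v_1,\dots,v_d$ (pinning a copy ``out'' deletes its incident edge, pinning it ``in'' deletes the vertex), arriving at graphs $G_i$ in which $v_i$ has a single pendant hyperedge; the factorization over $j$ then comes from $\Pr[G_i-v_i-e_i]{\bigwedge_j v_{ij}\in I}=\prod_j\Pr[G_{ij}]{v_{ij}\in I}$, a chain-rule step the paper leaves implicit. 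You bypass $G'$ entirely by writing $R_v=\lambda\,Z(G-v)/Z(\hat G)$ and telescoping over the edge deletions $H_{i-1}\to H_i$, which yields the factor $1-\Pr[H_i]{e_i-v\subseteq I}$ without ever introducing copies; the price is that you must reconcile your conditioned graph $K=H_i-\set{v_{i1},\dots,v_{i,j-1}}$ with the paper's $G_{ij}$, which you do correctly by noting they differ only in the $i-1$ isolated copies $v_1,\dots,v_{i-1}$, whose factor $(1+\lambda)^{i-1}$ cancels from the marginal. (The paper's $G_i$ is precisely your $H_i$ together with $v_i$, its pendant edge $e_i-v+v_i$, and those same isolated copies, so the two constructions are formally interchangeable.) Your version is arguably more self-contained because it spells out both the chain-rule step and the isolated-vertex cancellation. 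Your final remark about normalizing to hyperedges of size at least $2$ forming an antichain is a reasonable precaution for keeping the $G_{ij}$ nondegenerate, but the identity itself only requires that the relevant partition functions be nonzero; the reduction is not strictly necessary for the lemma.
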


The proof of this lemma is postponed to appendix.

\paragraph{The Uniqueness Condition}
Let the underlying graph be an infinite $d$-ary tree, then the
recursion \eqref{eq:ind-recursion} becomes
\[
  f_{\lambda,d}(x)=\lambda \tuple{\frac{1}{1+x}}^d.
\]
Let $\hat x$ be the positive fixed-point of $f_{\lambda,d}(x)$, i.e.,
$\hat x>0$ and $f_{\lambda,d}(\hat x)=\hat x$. The condition on
$\lambda$ for the uniqueness of the Gibbs measure is that
$\abs{f'_{\lambda,d}(\hat x)}<1$.  The following proposition is
well-known.

\begin{proposition}
  Let $\lambda_c=\frac{d^d}{(d-1)^{d+1}}$, then
  $\abs{f'_{\lambda_c,d}(\hat x)}=1$ and for every
  $0<\lambda<\lambda_c$, it holds that
  $\abs{f'_{\lambda,d}(\hat x)}<1$.
\end{proposition}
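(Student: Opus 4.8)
The plan is to eliminate $\lambda$ from the expression $\abs{f'_{\lambda,d}(\hat x)}$ using the fixed-point equation, reducing the whole statement to a monotonicity comparison in the single variable $\hat x$. First I would differentiate: $f'_{\lambda,d}(x) = -\lambda d(1+x)^{-(d+1)}$. Since $\hat x$ satisfies $f_{\lambda,d}(\hat x)=\hat x$, i.e.\ $\hat x = \lambda(1+\hat x)^{-d}$, we have $\lambda = \hat x(1+\hat x)^d$, and substituting gives
\[
f'_{\lambda,d}(\hat x) = -\,\hat x(1+\hat x)^d\cdot d\,(1+\hat x)^{-(d+1)} = -\frac{d\,\hat x}{1+\hat x}.
\]
Hence $\abs{f'_{\lambda,d}(\hat x)} = \frac{d\,\hat x}{1+\hat x}$, which is a strictly increasing function of $\hat x$ on $(0,\infty)$; it equals $1$ exactly when $(d-1)\hat x = 1$, i.e.\ $\hat x = \frac{1}{d-1}$ (here $d\ge 2$, consistent with $\Delta\ge 2$ in Theorem~\ref{thm:ind-main}), and is strictly less than $1$ exactly when $\hat x < \frac{1}{d-1}$.

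Next I would show that $\hat x$ is well-defined and increases with $\lambda$. The map $g(x) \defeq x(1+x)^d$ is a continuous, strictly increasing bijection of $(0,\infty)$ onto $(0,\infty)$, and the positive-fixed-point condition is precisely $g(\hat x) = \lambda$. Therefore $\hat x$ exists, is the unique positive fixed point, and is a strictly increasing function of $\lambda$. (Equivalently, $f_{\lambda,d}$ is strictly decreasing on $(0,\infty)$ while the identity is increasing, forcing a unique crossing.)

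Finally I would identify the threshold by evaluating $g$ at the critical point:
\[
g\!\tuple{\tfrac{1}{d-1}} = \frac{1}{d-1}\tuple{\frac{d}{d-1}}^{d} = \frac{d^d}{(d-1)^{d+1}} = \lambda_c.
\]
Combining this with the monotonicity of $g$: $\lambda = \lambda_c \iff \hat x = \frac{1}{d-1} \iff \abs{f'_{\lambda,d}(\hat x)} = 1$, and $0<\lambda<\lambda_c \iff \hat x < \frac{1}{d-1} \iff \abs{f'_{\lambda,d}(\hat x)} < 1$, which is the assertion.

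I do not expect a real obstacle here; the proof is essentially a one-line substitution plus a monotonicity argument. The only points requiring a word of care are the uniqueness of the positive fixed point (handled by strict monotonicity of $g$) and the fact that the algebra with $d-1$ in the denominator presupposes $d\ge 2$, which is exactly the regime of interest.
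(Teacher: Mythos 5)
Your proof is correct and complete. Note, though, that the paper does not actually prove this proposition at all: it is stated as ``well-known'' and then invoked without argument (the same fact $\abs{f_{\lambda_c',d}'(\hat x)}=1$ reappears at the end of the proof of Lemma~\ref{lem:ind-decay-tech}, again taken for granted). Your derivation is the standard one and is clean: using the fixed-point relation to eliminate $\lambda$ and write $\abs{f'_{\lambda,d}(\hat x)} = \frac{d\hat x}{1+\hat x}$ is the right reduction, the monotonicity of this quantity in $\hat x$ is immediate, and passing through the strictly increasing bijection $g(x)=x(1+x)^d$ both settles existence/uniqueness of the positive fixed point and shows $\hat x$ increases with $\lambda$, so the threshold comes out exactly at $g(\tfrac{1}{d-1})=\lambda_c$. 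The one remark I'd add for completeness is a line checking $g$ is indeed strictly increasing ($g'(x)=(1+x)^{d-1}(1+(d+1)x)>0$), but that is routine. In short: you have supplied a correct proof where the paper supplied none.
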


\subsection{The algorithm to compute marginal
  probability}\label{sec:indset-algorithm}

Let $G(V,\mathcal{E})$ be a hypergraph with maximum degree $\Delta$
and $v\in V$ be an arbitrary vertex with degree $d$. Define $G_{ij}$,
$R_v$, $R_{ij}$ as in Section \ref{sec:ind-recursion}. Then the
recursion \eqref{eq:ind-recursion} gives a way to compute the marginal
probability $\Pr[G]{v\in I}$ \emph{exactly}. However, an exact
evaluation of the recursion requires a computation tree with
exponential size. Thus we introduce the following truncated version of
the recursion, with respect to constants $c>0$ and $0<\alpha<1$.
\[
  R(G,v,L) =
  \begin{cases}
    \lambda\prod_{i=1}^d\tuple{1-\prod_{j=1}^{w_i}\frac{R(G_{ij},v_{ij},L)}{1+R(G_{ij},v_{ij},L)}} & \mbox{if } d=\Delta\\
    \lambda\prod_{i=1}^d\tuple{1-\prod_{j=1}^{w_i}\frac{R(G_{ij},v_{ij},L-\lfloor
        1+c\log_{1/\alpha}
        w_i\rfloor)}{1+R(G_{ij},v_{ij},L-\lfloor 1+c\log_{1/\alpha} w_i\rfloor)}} & \mbox{if }d<\Delta\mbox{ and }L>0\\
    \lambda & \mbox{otherwise.}
  \end{cases}
\]

The recursion can be directly used to compute $R(G,v,L)$ for any given
$L$ and it induces a truncated computation tree (with height $L$ in
some special metric). It is worth noting that, the case that
$d=\Delta$ can only happen at the root of the computation tree, since
in each smaller instance, the degree of $v_{ij}$ is decreased by at
least one.

We claim that $R(G,v,L)$ is a good estimate of $R_v$ with a suitable
choice of $c$ and $\alpha$, for those $(\lambda,\Delta)$ in the
uniqueness region.

\begin{lemma}\label{lem:ind-decay}
  Let $G(V,\mathcal{E})$ be a hypergraph with maximum degree
  $\Delta\ge 2$. Let $v\in V$ be a vertex with degree $d$ and let
  $\lambda<\lambda_c=\frac{(\Delta-1)^{\Delta-1}}{(\Delta-2)^{\Delta}}$
  be the activity parameter.  There exist constants $C>0$ (more
  precisely, $C = 6\lambda\sqrt{1+\lambda}$) and $\alpha \in (0, 1)$
  such that
  \[
    \abs{R(G,v,L)-R_v}\le C\cdot\alpha^{\max\{0,L\}}
  \]
  for every $L$.
\end{lemma}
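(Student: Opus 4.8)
The plan is to establish a potential-function-based correlation decay argument, where the ``efficient'' part of the decay is absorbed into the unusual step-count weighting $\lfloor 1 + c\log_{1/\alpha} w_i\rfloor$ attached to large hyperedges. First I would compare the true recursion \eqref{eq:ind-recursion} with its truncated version $R(G,v,L)$ along the computation tree. Both evaluate the same recursive function; the only discrepancy comes from the leaves, where $R(G,v,L)$ substitutes the crude guess $\lambda$ for $R_v$ at depth $L \le 0$. Since $0 < R_v \le \lambda$ always and $R(G,v,L) \in (0,\lambda]$ as well (each factor $1 - \prod_j \frac{R_{ij}}{1+R_{ij}} \in (0,1]$), the initial error at any leaf is at most $\lambda$, so it suffices to show the recursion contracts this error by a factor of roughly $\alpha$ per ``unit'' of the special metric.

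The core is an amortized/potential argument. I would introduce a smooth increasing potential $\Phi$ (a natural choice is $\Phi(x) = \text{arcsinh}\sqrt{x}$ or $\Phi(x)=\int \frac{dx}{\sqrt{x(1+x)}}$, which is what makes the $\sqrt{1+\lambda}$ and the $6$ appear in $C$), work in the variables $y = \Phi(x)$, and bound the ``amplification factor''
\[
\kappa \;=\; \sup \; \frac{\Phi'(R_v)}{\prod \text{(weights)}}\;\sum_{i,j}\abs{\frac{\partial R_v}{\partial R_{ij}}}\cdot\frac{1}{\Phi'(R_{ij})}\,\alpha^{-\lfloor 1 + c\log_{1/\alpha} w_i\rfloor},
\]
where the supremum is over all admissible $d \le \Delta$, hyperedge sizes $w_i$, and values $R_{ij} \in (0,\lambda]$, and I would show $\kappa \le 1$ — or rather that the per-step decay is at most $\alpha$ after accounting for the $\lfloor 1 + c\log_{1/\alpha} w_i\rfloor$ steps charged to edge $e_i$. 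Differentiating \eqref{eq:ind-recursion}, $\frac{\partial R_v}{\partial R_{ij}} = R_v \cdot \frac{1}{(1+R_{ij})^2} \cdot \frac{\prod_{j'\ne j}\frac{R_{ij'}}{1+R_{ij'}}}{1 - \prod_{j'}\frac{R_{ij'}}{1+R_{ij'}}}$, and the key algebraic step is to sum over $j \in [w_i]$: setting $t = \prod_{j'}\frac{R_{ij'}}{1+R_{ij'}}$, the combined contribution of hyperedge $e_i$ to the decay is governed by a quantity like $\frac{t}{1-t}\cdot(\text{something})$, which for fixed $t$ is \emph{maximized when $w_i$ is smallest}, i.e. $w_i = 1$ (the normal-graph case). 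This is the monotonicity claim advertised in the introduction. Establishing it cleanly — that enlarging a hyperedge only shrinks its contribution to the amplification — and then checking the residual normal-graph bound reduces to $\abs{f'_{\lambda,d}(\hat x)} < 1$, which is exactly the uniqueness condition $\lambda < \lambda_c$ via the stated Proposition, is the heart of the argument.

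I expect the main obstacle to be the interplay between the monotonicity-in-$w_i$ argument and the polynomial modification of the decay metric: one must choose $c$ (depending on $\lambda,\Delta$) so that the $\alpha^{-\lfloor 1 + c\log_{1/\alpha} w_i \rfloor} \approx \alpha^{-1} w_i^{-c}$ penalty for a size-$w_i$ edge exactly dominates the (at most) factor-$w_i$-ish growth in the sum $\sum_{j=1}^{w_i}\abs{\partial R_v/\partial R_{ij}}/\Phi'(R_{ij})$, \emph{uniformly} in $w_i$ and simultaneously with $\kappa \le 1$ holding for all $d \le \Delta$. Because there is no slack in the threshold (unlike \cite{LiuL15}), this calibration has to be done by honest estimation rather than numerics: I would first prove the one-step contraction $\Phi(R(G,v,L)) $ vs $\Phi(R_v)$ contracts by $\alpha$ for each unit decrease of $L$, then unfold the recursion to get $\abs{\Phi(R(G,v,L)) - \Phi(R_v)} \le \Phi(\lambda)\,\alpha^{\max\{0,L\}}$, and finally convert back via the Lipschitz bound $\abs{R(G,v,L) - R_v} \le \frac{1}{\min \Phi'}\abs{\Phi(R(G,v,L)) - \Phi(R_v)}$, where $\min_{x\in(0,\lambda]}\Phi'(x) = \frac{1}{\sqrt{\lambda(1+\lambda)}}$ (or similar), producing the constant $C = 6\lambda\sqrt{1+\lambda}$ after bounding $\Phi(\lambda) \le $ a small multiple of $\sqrt\lambda$. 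Separately one must verify the boundary case $L \le 0$ and the fact noted after the recursion that $d = \Delta$ occurs only at the root (so it needs no special handling in the inductive step, only at the top level), both of which are routine.
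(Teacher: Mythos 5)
Your plan follows the same roadmap as the paper's proof: the potential $\phi(x)=2\sinh^{-1}\sqrt{x}$ with derivative $\Phi(x)=1/\sqrt{x(1+x)}$ (which yields the $\sqrt{\lambda(1+\lambda)}$ factor in $C=6\lambda\sqrt{1+\lambda}$); an induction on $\max\{0,L\}$ bounding $\abs{\phi(R_v)-\phi(R(G,v,L))}$; an amortized decay-rate quantity in which the penalty $w_i^c$ is exactly what makes the step charge $\lfloor 1+c\log_{1/\alpha}w_i\rfloor$ come out right; a reduction to the normal-graph worst case $\mathbf{w}=\mathbf{1}$, after which the fixed-point/uniqueness calculation gives $\alpha<1$; and separate handling of $d=\Delta$ at the root (the factor $3$, hence the $6$ in $C$).

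There is, however, a genuine gap in the monotonicity step as you describe it. You claim that, setting $t=\prod_{j}\frac{R_{ij}}{1+R_{ij}}$, the combined contribution of a hyperedge $e_i$ is ``for fixed $t$ maximized when $w_i$ is smallest,'' but the quantity you need to bound, $\sum_{j\in[w_i]}\frac{\Phi(f)}{\Phi(R_{ij})}\abs{\partial f/\partial R_{ij}}$, is \emph{not} a function of the product $t$ and $w_i$ alone: after factoring it reduces to a multiple of $\sum_j\frac{1-t_{ij}}{\sqrt{t_{ij}}}$, which depends on all the individual $t_{ij}=\frac{R_{ij}}{1+R_{ij}}$. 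The paper's Lemma~\ref{lem:derivative} first invokes Karamata's inequality applied to the convex function $\frac{1-e^x}{\sqrt{e^x}}$ to show that, with the product $\prod_j t_{ij}$ held fixed, this sum is maximized when all but one $t_{ij}$ sit at the boundary $\frac{\lambda}{1+\lambda}$. Only after this reduction does each hyperedge's contribution collapse to a one-parameter expression $g_{\mathbf z}(\mathbf w)$; only then can one differentiate in $w_i$ and verify the derivative is negative, and that verification in turn requires an upper bound on $c$ (which must also be strictly positive for efficiency). Your proposal names the monotonicity as the heart of the argument, which is correct, but without the Karamata reduction the statement you want to prove is not even well-posed, so this step must be supplied.
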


The whole proof of this lemma is postponed to the next section.

\begin{proof}[Proof of Theorem \ref{thm:ind-main}]
  Assuming Lemma \ref{lem:ind-decay}, the proof of Theorem
  \ref{thm:ind-main} is routine and we put the proof into appendix.
\end{proof}

\subsection{Correlation decay}

In this section, we establish Lemma \ref{lem:ind-decay}. We first prove
some technical lemmas.

Suppose $f : D^d \rightarrow \mathbb{R}$ is a $d$-ary function where $D
\subseteq \mathbb{R}$ is a convex set, let $\phi: \mathbb{R} \to \mathbb{R}$ be an
increasing differentiable function and $\Phi(x) \defeq \phi'(x)$.
The following proposition is a consequence of the mean value theorem:

\begin{proposition}\label{prop:tech}
  For every $\mathbf{x} = (x_1, \ldots , x_d)$, $\mathbf{\hat{x}} =
  (\hat{x}_1, \ldots , \hat{x}_d) \in D^d$, it holds that
  \begin{enumerate}
  \item $\abs{f(\mathbf{x}) - f(\mathbf{\hat{x}})} =
    \frac{1}{\Phi(\tilde{x})}\abs{\phi(f(\mathbf{x})) -
      \phi(f(\mathbf{\hat{x}}))}$ for some $\tilde{x} \in D$;
  \item $ \abs{\phi(f(\mathbf{x}))-\phi(f(\mathbf{\hat x}))} \le
    \sum_{i=1}^d\frac{\Phi(f)}{\Phi(\tilde x_i)}\abs{\frac{\partial
        f(\mathbf{\tilde x})}{\partial x_i}} \cdot
    \abs{\phi(x_i)-\phi(\hat x_i)} $ for some $\mathbf{\tilde
      x}=(\tilde x_1,\dots,\tilde x_d)\in D^d$.
  \end{enumerate}
\end{proposition}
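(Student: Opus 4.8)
The plan is to obtain both statements from the mean value theorem --- part~1 from its one-variable form applied to $\phi$, and part~2 from its multivariable form applied to the $\phi$-conjugate of $f$. For part~1 I would apply the scalar mean value theorem to $\phi$ on the closed interval whose endpoints are $f(\mathbf{x})$ and $f(\hat{\mathbf{x}})$: there is a point $\tilde x$ between $f(\mathbf{x})$ and $f(\hat{\mathbf{x}})$ with $\phi(f(\mathbf{x})) - \phi(f(\hat{\mathbf{x}})) = \Phi(\tilde x)\tuple{f(\mathbf{x}) - f(\hat{\mathbf{x}})}$. Since $\phi$ is strictly increasing, the average rate of change on the left is nonzero whenever $f(\mathbf{x}) \ne f(\hat{\mathbf{x}})$, hence $\Phi(\tilde x) \ne 0$; dividing by it and taking absolute values gives the identity (the case $f(\mathbf{x}) = f(\hat{\mathbf{x}})$ being trivial), and since $f$ takes values in $D$ we indeed have $\tilde x \in D$.

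For part~2 the point is to run the mean value theorem in the variables $u_i \defeq \phi(x_i)$ rather than in the $x_i$ themselves. Since $D$ is a convex subset of $\mathbb{R}$, that is, an interval, and $\phi$ is continuous and strictly increasing, $J \defeq \phi(D)$ is again an interval (so $J^d$ is convex) and $\phi^{-1}: J \to D$ is differentiable with $(\phi^{-1})'(u) = 1/\Phi(\phi^{-1}(u))$. Set
\[
F(u_1,\dots,u_d) \defeq \phi\tuple{f\tuple{\phi^{-1}(u_1),\dots,\phi^{-1}(u_d)}},
\]
so that $\phi(f(\mathbf{x})) = F(\phi(x_1),\dots,\phi(x_d))$ and likewise for $\hat{\mathbf{x}}$; by the chain rule $F$ is differentiable on $J^d$, and with $\mathbf{z} \defeq (\phi^{-1}(u_1),\dots,\phi^{-1}(u_d))$ one gets $\frac{\partial F}{\partial u_i}(\mathbf{u}) = \frac{\Phi(f(\mathbf{z}))}{\Phi(z_i)}\frac{\partial f}{\partial x_i}(\mathbf{z})$. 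Applying the multivariable mean value theorem to $F$ along the segment from $(\phi(\hat x_1),\dots,\phi(\hat x_d))$ to $(\phi(x_1),\dots,\phi(x_d))$ --- which lies in $J^d$ by convexity --- produces some $t^\ast \in (0,1)$; writing $\tilde u_i \defeq \phi(\hat x_i) + t^\ast\tuple{\phi(x_i)-\phi(\hat x_i)}$ and $\tilde x_i \defeq \phi^{-1}(\tilde u_i)$, which lies between $\hat x_i$ and $x_i$ and hence in $D$, this becomes
\[
\phi(f(\mathbf{x})) - \phi(f(\hat{\mathbf{x}})) = \sum_{i=1}^d \frac{\Phi(f(\tilde{\mathbf{x}}))}{\Phi(\tilde x_i)}\frac{\partial f}{\partial x_i}(\tilde{\mathbf{x}})\tuple{\phi(x_i) - \phi(\hat x_i)}.
\]
Taking absolute values and applying the triangle inequality (using $\Phi > 0$) gives the claimed bound, with the shorthand $\Phi(f) = \Phi(f(\tilde{\mathbf{x}}))$.

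The main thing to get right --- and the reason I would not prove part~2 by a naive chain of three separate scalar mean value theorems --- is that the factor $\Phi(f(\tilde{\mathbf{x}}))$ must be evaluated at the \emph{same} point $\tilde{\mathbf{x}}$ at which the partials $\partial f/\partial x_i$ are taken, with the denominators $\Phi(\tilde x_i)$ at that point's coordinates. Chaining scalar mean value theorems (for $\phi$, then for $f$, then for $\phi$ in each coordinate) would instead put $\Phi$ at some point strictly between $f(\mathbf{x})$ and $f(\hat{\mathbf{x}})$, and the $\Phi(\tilde x_i)$ at coordinates unrelated to the gradient evaluation point; parametrizing in the $u = \phi(x)$ coordinates is exactly what forces all of these to the correct point in one stroke. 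Beyond this, one only checks the regularity needed for the mean value theorem and the chain rule: $\Phi = \phi' > 0$ everywhere, so that $\phi^{-1}$ is $C^1$, and $f$ differentiable on the convex set $D^d$ --- both hold wherever the proposition is invoked, since there $\phi$ is chosen with strictly positive derivative and $f$ is a smooth recursion map (such as the one underlying \eqref{eq:ind-recursion}).
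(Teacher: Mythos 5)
The paper gives no proof of this proposition, stating only that it is ``a consequence of the mean value theorem,'' so there is nothing to compare against directly; your proof is a correct and complete filling-in of that omission. Your argument is the natural one: part~1 is the scalar mean value theorem for $\phi$ (with $\tilde x\in D$ because $f$ takes values in the convex set $D$), and part~2 is the multivariable mean value theorem applied to $F = \phi\circ f\circ(\phi^{-1},\dots,\phi^{-1})$ in the $u=\phi(x)$ coordinates, which is exactly what makes $\Phi(f)$, the partials of $f$, and the denominators $\Phi(\tilde x_i)$ land at a single common point $\tilde{\mathbf{x}}$ as the statement requires; your remark that chaining scalar MVTs would not achieve this is correct and worth making explicit.
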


\begin{lemma}\label{lem:derivative}
  Let $\Delta\ge 2$ be a constant integer and
  $\lambda<\lambda_c=\frac{(\Delta-1)^{\Delta-1}}{(\Delta-2)^{\Delta}}$
  be a constant real. Let $d<\Delta$ and $w_1,\dots,w_d>0$ be integers
  and
  $f=\lambda\prod_{i=1}^d\tuple{1-\prod_{j=1}^{w_i}\frac{x_{ij}}{1+x_{ij}}}$
  be a $\tuple{\sum_{i=1}^dw_i}$-ary function. Let
  $\Phi(x)=\frac{1}{\sqrt{x(1+x)}}$.
  Let $c<\min\left\{\frac{\log(1+\lambda)-\log\lambda}{2+4\lambda}, \frac{2\lambda+1}{2}\log\left(\frac{1+\lambda}{\lambda}\right)-1\right\}$ be a positive number. There exists
  a constant $\alpha<1$ depending on $\lambda$ and $d$ (but not
  depending on $w_i$ for all $i\in[d]$) such that
  \[
  \sum_{a=1}^dw_a^c\sum_{b=1}^{w_a}\frac{\Phi(f)}{\Phi(x_{ab})}\abs{\frac{\partial
      f(\mathbf{x})}{\partial x_{ab}}}\le \alpha<1
  \]
  for every ${\bf x}=(x_{ij})_{i\in[d],j\in[w_i]}$ where each
  $x_{ij}\in [0,\lambda]$
\end{lemma}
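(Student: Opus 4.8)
The plan is to run the standard amortised-decay argument with the potential $\phi$ determined by $\phi'=\Phi$, but to insert a ``monotonicity in the edge size'' step so that the only estimate one actually has to prove is the one for a normal edge. First I would compute the derivatives. Writing $Q_i\defeq\prod_{j\in[w_i]}\frac{x_{ij}}{1+x_{ij}}$ and $P_i\defeq 1-Q_i$, so that $f=\lambda\prod_{i=1}^dP_i$, one checks $\frac{\partial Q_a}{\partial x_{ab}}=\frac{Q_a}{x_{ab}(1+x_{ab})}$ and hence $\abs{\frac{\partial f}{\partial x_{ab}}}=\lambda\bigl(\prod_{i\ne a}P_i\bigr)\frac{Q_a}{x_{ab}(1+x_{ab})}=\frac{f}{P_a}\cdot\frac{Q_a}{x_{ab}(1+x_{ab})}$. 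Since $\Phi(x)=\bigl(x(1+x)\bigr)^{-1/2}$, the weight $\frac{\Phi(f)}{\Phi(x_{ab})}=\Phi(f)\sqrt{x_{ab}(1+x_{ab})}$ cancels one power of $x_{ab}(1+x_{ab})$ from the derivative, and the quantity to be bounded collapses to
\[
\sqrt{\tfrac{f}{1+f}}\;\sum_{a=1}^d w_a^c\,\frac{Q_a}{1-Q_a}\sum_{b=1}^{w_a}\frac1{\sqrt{x_{ab}(1+x_{ab})}}.
\]

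Next I would substitute $t_b\defeq\frac{x_{ab}}{1+x_{ab}}\in(0,\tau]$, where $\tau\defeq\frac\lambda{1+\lambda}$, so that $Q_a=\prod_{b}t_b$ and $\frac1{\sqrt{x_{ab}(1+x_{ab})}}=\frac{1-t_b}{\sqrt{t_b}}$, and prove the per-edge bound
\[
w_a^{c}\,\frac{Q_a}{1-Q_a}\sum_{b=1}^{w_a}\frac1{\sqrt{x_{ab}(1+x_{ab})}}\;\le\;\sqrt{Q_a}.
\]
After clearing denominators and using $\sqrt{Q_a}/\sqrt{t_b}=\sqrt{\prod_{j\ne b}t_j}$, this is equivalent to $m^{c}\sum_{b=1}^{m}(1-t_b)\sqrt{\prod_{j\ne b}t_j}\le 1-\prod_{j=1}^m t_j$ with $m\defeq w_a$, which is an equality when $m=1$; it is the precise form of the statement that shrinking a hyperedge to a normal edge only makes the decay worse, so normal graphs are the extremal instances. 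Summing this bound over $a$ turns the displayed quantity into $\sqrt{\tfrac{f}{1+f}}\sum_{a=1}^d\sqrt{Q_a}$.

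To establish the per-edge inequality I would put $s_b=\sqrt{t_b}\in(0,\sqrt\tau]$ and rewrite its left side as $\bigl(\prod_b s_b\bigr)\sum_b\bigl(\tfrac1{s_b}-s_b\bigr)$ and its right side as $1-\bigl(\prod_b s_b\bigr)^2$. For a fixed value of $\prod_b s_b$ the map $s\mapsto\tfrac1s-s$ is convex in $\log s$, so $\sum_b(\tfrac1{s_b}-s_b)$ is maximised at a vertex of the feasible region, where all but one of the $s_b$ equal $\sqrt\tau$; this reduces the inequality to a one-parameter statement for each $m$, whose worst instances are the full boundary point $t_1=\cdots=t_m=\tau$ (there it reads $m^{c+1}\tau^{(m-1)/2}\le\sum_{k=0}^{m-1}\tau^k$ and is already tight at $m=2$) and the regime $m\to\infty$. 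The hypotheses $c<\frac{\log(1+\lambda)-\log\lambda}{2+4\lambda}$ and $c<\frac{2\lambda+1}{2}\log\frac{1+\lambda}{\lambda}-1$ are calibrated precisely to make these one-parameter inequalities hold for every $m\ge1$ (the polynomial factor $w_a^c$ is absorbed thanks to the strict slack in the $m=2$ case and the decay of $\tau^{(m-1)/2}$ for large $m$). This verification is the technical heart of the lemma and the step I expect to require the most care.

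Finally it remains to bound $\sqrt{\tfrac{f}{1+f}}\sum_{a=1}^d\sqrt{Q_a}$ with $f=\lambda\prod_{a=1}^d(1-Q_a)$ and $Q_a\in[0,\tau]$. Setting $R_a\defeq\frac{Q_a}{1-Q_a}\in[0,\lambda]$ gives $1-Q_a=(1+R_a)^{-1}$, $\sqrt{Q_a}=\sqrt{R_a/(1+R_a)}$ and $f=\lambda\prod_a(1+R_a)^{-1}=f_{\lambda,d}(R_1,\dots,R_d)$, so the quantity is exactly the $\phi$-amortised contraction ratio $\sum_{a}\frac{\Phi(f)}{\Phi(R_a)}\abs{\partial f/\partial R_a}$ of the arity-$d$ hardcore recursion. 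Since $d<\Delta$ we have $\lambda<\lambda_c\le\frac{d^d}{(d-1)^{d+1}}$, so this recursion lies inside its uniqueness regime; hence by the proposition on the uniqueness threshold above together with the classical analysis of $f_{\lambda,d}$ under the potential $\phi$ (as in \cite{LLY12,SST}), this ratio is strictly below $1$ everywhere on the compact cube $[0,\lambda]^d$. Taking $\alpha$ to be its maximum there gives a constant depending only on $\lambda$ and $d$ with $\alpha<1$, which is what the lemma asserts.
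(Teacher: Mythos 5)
Your reduction matches the paper's up to the crucial middle step: compute the amortised derivative under the potential $\Phi(x)=(x(1+x))^{-1/2}$, substitute $t_{ij}=x_{ij}/(1+x_{ij})$ so the quantity to bound becomes $\sqrt{f/(1+f)}\sum_a w_a^c\frac{Q_a}{1-Q_a}\sum_b\frac{1-t_{ab}}{\sqrt{t_{ab}}}$, apply the Karamata/convexity argument (convexity of $s\mapsto(1-e^s)e^{-s/2}$, equivalently your convexity of $s\mapsto 1/s-s$ in $\log s$) to push all but one variable per hyperedge to the boundary $\tau=\lambda/(1+\lambda)$, and finally reduce to the normal-graph ($w_i\equiv 1$) contraction ratio of the $d$-ary hardcore recursion in the uniqueness regime, which is the content of the paper's Lemma \ref{lem:ind-decay-tech}. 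All of that is correct and is the same route.

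The gap is exactly where you flag it. You isolate a per-edge inequality $w_a^c\frac{Q_a}{1-Q_a}\sum_b\frac{1-t_b}{\sqrt{t_b}}\le\sqrt{Q_a}$ (equivalently $m^c\sum_b(1-t_b)\sqrt{\prod_{j\ne b}t_j}\le 1-\prod_j t_j$) and assert, without proof, that after Karamata the only cases to check are the all-boundary point $t_1=\cdots=t_m=\tau$ (giving $m^{c+1}\tau^{(m-1)/2}\le\sum_{k=0}^{m-1}\tau^k$) and $m\to\infty$, and that the hypotheses on $c$ were ``calibrated precisely'' to make these hold. Neither claim is established. After Karamata there is still a free coordinate $t_1\in[0,\tau]$ for each $m$, and it is not clear a priori that its worst value is $t_1=\tau$ (the expression is not monotone in $t_1$: increasing $t_1$ increases the $Q_a/(1-Q_a)$ and $\sqrt{Q_a}$ factors but decreases $\frac{1-t_1}{\sqrt{t_1}}$). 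Also, ``already tight at $m=2$'' is misleading: at $m=2$ the all-boundary inequality is a genuine (strict) constraint on $c$, not an equality, so the phrase conflates ``binding'' with ``tight.''

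The paper avoids identifying a global extremum by proving a monotonicity statement instead. It sets $z_i=1-Q_i$, writes $t_i=(1-z_i)\tuple{\tfrac{1+\lambda}{\lambda}}^{w_i-1}$, and shows that for each fixed $z_i$ the per-edge summand $w_i^c\tuple{\tfrac{1-t_i}{\sqrt{t_i}}+\tfrac{w_i-1}{\sqrt{\lambda+\lambda^2}}}$ is decreasing in $w_i$; the two upper bounds on $c$ in the hypothesis are chosen precisely so that the quantity $p(t,w)$ in the derivative computation is $\le p(\tfrac{\lambda}{1+\lambda},1)<0$, not to make your one-parameter boundary inequalities work. Once monotonicity is in hand, the maximum over $\mathbf{w}$ is at $\mathbf{w}=\mathbf{1}$ with no remaining case analysis, and your per-edge inequality follows as a corollary since the $w_i=1$ value is exactly $\sqrt{Q_i}$. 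So to make your proposal rigorous you would either have to prove the $t_1$-extremality and the infinite family of boundary inequalities (a nontrivial task with no obvious shortcut), or replace that step by the fixed-$z_i$ monotonicity-in-$w_i$ argument the paper actually uses.
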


The lemma bounds the amortized decay rate, which is the key to the proof of correlation decay. 
In previous works, the amortized decay rate is defined as   \[
  \sum_{a=1}^d\sum_{b=1}^{w_a}\frac{\Phi(f)}{\Phi(x_{ab})}\abs{\frac{\partial
      f(\mathbf{x})}{\partial x_{ab}}},
  \]
  without the $w_a^c$ factor. Then one need to give a constant $\alpha<1$ bound for small $w_a$ and a sub constant bound for large  $w_a$. With this modification, we only need to prove a single bound as above. 

Notice that we require $c$ to be a positive constant, so it is necessary to verify that $\frac{2\lambda+1}{2}\log\left(\frac{1+\lambda}{\lambda}\right)-1 > 0$ for every $\lambda > 0$. To see this,
let
$h(\lambda)\defeq
\frac{2\lambda+1}{2}\log\tuple{\frac{1+\lambda}{\lambda}}-1$,
then we can compute that
\begin{align*}
h'(\lambda)&=\log\tuple{\frac{1+\lambda}{\lambda}}-\frac{1+2\lambda}{2\lambda+2\lambda^2},\\
h''(\lambda)&=\frac{1}{2\lambda^2(1+\lambda)^2}.
\end{align*}
Since $h''(\lambda)>0$ for every $\lambda$, $h'(\lambda)$ is
increasing. Along with the fact that
$\lim_{\lambda\to\infty}h'(\lambda)=0$, we have $h'(\lambda)<0$ for
every $\lambda> 0$. This implies that $h(\lambda)$ is decreasing. Also
note that
\[
\lim_{\lambda\to\infty}h(\lambda)=\lim_{\lambda\to\infty}\log\tuple{\tuple{1+\frac{1}{\lambda}}^\lambda\tuple{1+\frac{1}{\lambda}}^{1/2}}-1=0.
\]
It holds that $h(\lambda)>0$ for every $\lambda>0$. Thus a positive
$c$ satisfying $c<h(\lambda)$ exists for every $\lambda> 0$.

\begin{proof}[Proof of Lemma \ref{lem:derivative}]
  To simplify the notation, we first let
  $t_{ij}=\frac{x_{ij}}{1+x_{ij}}$, then for every $i\in[d]$ and
  $j\in[w_i]$, it holds that
  $t_{ij}\in\left[0,\frac{\lambda}{1+\lambda}\right]$ and
  \[
  f=\lambda\prod_{i=1}^d\tuple{1-\prod_{j=1}^{w_i}t_{ij}}.
  \]
  For every $a\in[d]$ and $b\in[w_i]$, we have
  \[
  \abs{\frac{\partial f}{\partial x_{ab}}}
  =\lambda(1-t_{ab})^2\prod_{\substack{j\in[w_a]\\j\ne
      b}}t_{aj}\cdot\prod_{\substack{i\in[d]\\i\ne
      a}}\tuple{1-\prod_{j=1}^{w_i}t_{ij}}
  =f\cdot\frac{(1-t_{ab})^2}{t_{ab}}\cdot\frac{\prod_{j=1}^{w_a}t_{aj}}{1-\prod_{j=1}^{w_a}t_{aj}}.
  \]
  Thus
  \[
  \sum_{a=1}^dw_a^c\sum_{b=1}^{w_a}\frac{\Phi(f)}{\Phi(x_{ab})}\abs{\frac{\partial
      f}{\partial x_{ab}}}=
  \sqrt{\frac{f}{1+f}}\sum_{a=1}^d\frac{w_a^c\prod_{j=1}^{w_a}t_{aj}}{1-\prod_{j=1}^{w_a}t_{aj}}\sum_{b=1}^{w_a}\frac{1-t_{ab}}
  {\sqrt{t_{ab}}}.
  \]
  Let $\mathbf{t}=(t_{ij})_{i\in[d],j\in [w_i]}$%
  , define
  \begin{align*}
    h(\mathbf{t})
    &\defeq\sqrt{\frac{f}{1+f}}\sum_{a=1}^d\frac{w_a^c\prod_{j=1}^{w_a}t_{aj}}{1-\prod_{j}^{w_a}t_{aj}}\sum_{b=1}^{w_a}\frac{1-t_{ab}}
    {\sqrt{t_{ab}}}\\
    &=\sqrt{\frac{\lambda\prod_{i=1}^d\tuple{1-\prod_{j=1}^{w_i}t_{ij}}}{1+\lambda\prod_{i=1}^d\tuple{1-\prod_{j=1}^{w_i}t_{ij}}}}\sum_{a=1}^d\frac{w_a^c\prod_{j=1}^{w_a}t_{aj}}{1-\prod_{j}^{w_a}t_{aj}}\sum_{b=1}^{w_a}\frac{1-t_{ab}}
    {\sqrt{t_{ab}}}.
  \end{align*}
  For every $\mathbf{t}=(t_{ij})_{i\in[d],j\in [w_i]}$ where each
  $t_{ij}\in[0,\frac{\lambda}{1+\lambda}]$, define a tuple
  $\mathbf{\hat t}=(\hat t_{ij})_{i\in[d],j\in w_i}$ such that for
  every $i\in [d]$,
  \[
  \hat t_{ij}=
  \begin{cases}
    \tuple{\frac{1+\lambda}{\lambda}}^{w_i-1}\prod_{k=1}^{w_i}t_{ik} & \mbox{if }j=1\\
    \frac{\lambda}{1+\lambda} & \mbox{otherwise}.
  \end{cases}
  \]
  We claim that $h(\mathbf{t})\le h(\mathbf{\hat t})$. To see this,
  first note that for every $i\in [d]$,
  $\prod_{j=1}^{w_i}t_{ij}=\prod_{j=1}^{w_i}\hat t_{ij}$, it is
  sufficient to prove that for every $i\in [d]$
  \[
  \sum_{j=1}^{w_i}\frac{1-t_{ij}}{\sqrt{t_{ij}}}\le
  \sum_{j=1}^{w_i}\frac{1-\hat t_{ij}}{\sqrt{\hat t_{ij}}}.
  \]
  This is a consequence of the Karamata's inequality by noticing that
  the function $\frac{1-e^{x}}{\sqrt{e^x}}$ is convex.

  We rename $\hat t_{i1}$ to $t_i$ and it is sufficient to upper bound
  \begin{equation}
    \label{eq:upmid}
    g(\mathbf{t},\mathbf{w})\defeq
    \sqrt{\frac{\lambda\prod_{i=1}^d\tuple{1-\tuple{\frac{\lambda}{1+\lambda}}^{w_i-1}t_i}}{1+\lambda\prod_{i=1}^d\tuple{1-\tuple{\frac{\lambda}{1+\lambda}}^{w_i-1}t_i}}}\cdot
    \sum_{i=1}^d\frac{w_i^c\tuple{\frac{\lambda}{1+\lambda}}^{w_i-1}t_i}{1-\tuple{\frac{\lambda}{1+\lambda}}^{w_i-1}t_i}\cdot
    \tuple{\frac{1-t_i}{\sqrt{t_i}}+\frac{(w_i-1)}{\sqrt{\lambda+\lambda^2}}}
  \end{equation}
  where $t_i\in\left[0,\frac{\lambda}{1+\lambda}\right]$ and
  $w_i\in\mathbb{Z}^+$ for every $i\in[d]$.

  The argument so far is similar to the proof in \cite{LiuL15}. In the
  following, we prove a monotonicity property of each $w_i$ and thus
  avoid the heavy numerical analysis in \cite{LiuL15} and allow us to
  obtain a tight result.

  For every $i\in[d]$, we let $z_i\defeq
  1-\tuple{\frac{\lambda}{1+\lambda}}^{w_i-1}t_i$ and thus
  equivalently
  $t_i=(1-z_i)\tuple{\frac{1+\lambda}{\lambda}}^{w_i-1}$. For every
  fixed $\mathbf{z}=(z_1,\dots,z_d)$, we can write \eqref{eq:upmid} as
  \begin{equation}
    \label{eq:upmid2}
    g_{\mathbf{z}}(\mathbf{w})=
    \sqrt{\frac{\lambda\prod_{i=1}^dz_i}{1+\lambda\prod_{i=1}^dz_i}}\sum_{i=1}^d\frac{1-z_i}{z_i}
    \tuple{\frac{1-t_i}{\sqrt{t_i}}+\frac{w_i-1}{\sqrt{\lambda+\lambda^2}}}w_i^c.
  \end{equation}
  We show that $g_{\mathbf{z}}(\mathbf{w})$ is monotonically
  decreasing with $w_i$ for every $i\in[d]$.

  Denote
  $T_i\defeq\frac{1-t_i}{\sqrt{t_i}}+\frac{(w_i-1)}{\sqrt{\lambda+\lambda^2}}$,
  then
  \begin{equation}
    \label{eq:upmid3}
    \frac{\partial g_{\mathbf{z}}(\mathbf{w})}{\partial w_i}=
    \sqrt{\frac{\lambda\prod_{i=1}^dz_i}{1+\lambda\prod_{i=1}^dz_i}}\cdot\frac{1-z_i}{z_i}
    \tuple{\frac{\partial T_i}{\partial w_i}w_i^c+cw_i^{c-1}T_i}.
  \end{equation}
  The partial derivative \eqref{eq:upmid3} is negative for a suitable
  choice of $c$:
  \begin{align*}
    &\quad\,\,\frac{1-z_i}{z_i}\cdot\tuple{\frac{\partial
        T_i}{\partial
        z_i}w_i^c+cw_i^{c-1}T_i} \\
    &=\frac{1-z_i}{z_i}\cdot\tuple{\tuple{-\frac{1}{2}t_i'(t_i^{-1/2}+t_i^{-3/2})+\frac{1}{\sqrt{\lambda+\lambda^2}}}w_i^c+cw_i^{c-1}
      \tuple{\frac{1-t_i}{\sqrt{t_i}}+\frac{(w_i-1)}{\sqrt{\lambda+\lambda^2}}}}\\
    &=\frac{1-z_i}{z_i}\cdot
    w_i^{c-1}\tuple{\tuple{-\frac{1}{2}\log\tuple{\frac{1+\lambda}{\lambda}}(t_i^{1/2}+t_i^{-1/2})+\frac{1}{\sqrt{\lambda+\lambda^2}}}w_i
      +c\tuple{t_i^{-1/2}-t_i^{1/2}+\frac{(w_i-1)}{\sqrt{\lambda+\lambda^2}}}}\\
    &=\frac{1-z_i}{z_i}\cdot w_i^{c-1} \tuple{
      \frac{(c+1)w_i-c}{\sqrt{\lambda+\lambda^2}}- \tuple{
        t_i^{1/2}\tuple{\frac{1}{2}w_i\log\tuple{\frac{1+\lambda}{\lambda}}+c}+
        t_i^{-1/2}\tuple{\frac{1}{2}w_i\log\tuple{\frac{1+\lambda}{\lambda}}-c}
      } }
  \end{align*}
  Denote
  \[
  p(t,w)\defeq \frac{(c+1)w-c}{\sqrt{\lambda+\lambda^2}}-
  \tuple{
    t^{1/2}\tuple{\frac{1}{2}w\log\tuple{\frac{1+\lambda}{\lambda}}+c}+
    t^{-1/2}\tuple{\frac{1}{2}w\log\tuple{\frac{1+\lambda}{\lambda}}-c}
  }
  \]
  Since $c\le\frac{\log(1+\lambda)-\log\lambda}{2+4\lambda}$, the term
  \[
  t^{1/2}\tuple{\frac{1}{2}w\log\tuple{\frac{1+\lambda}{\lambda}}+c}+
  t^{-1/2}\tuple{\frac{1}{2}w\log\tuple{\frac{1+\lambda}{\lambda}}-c}
  \]
  achieves its minimum at $t=\frac{\lambda}{1+\lambda}$. Thus
  \begin{align*}
    p(t,w) &\le p\tuple{\frac{\lambda}{1+\lambda},w}
    =\tuple{\frac{\lambda}{1+\lambda}}^{1/2}\tuple{\frac{c+1}{\lambda}-\frac{2\lambda+1}{2\lambda}\log\tuple{\frac{1+\lambda}{\lambda}}}w.
  \end{align*}
  
  Moreover, $c<\frac{2\lambda+1}{2}\log\tuple{\frac{1+\lambda}{\lambda}}-1$ implies that 
$\frac{c+1}{\lambda} < \frac{2\lambda+1}{2\lambda}\log\left(\frac{1+\lambda}{\lambda}\right)$
holds, which consequently leads to
$p\left(\frac{\lambda}{1+\lambda},1\right) < 0$.

In all, we choose a positive constant
$c<\min\left\{\frac{\log(1+\lambda)-\log\lambda}{2+4\lambda}, \frac{2\lambda+1}{2}\log\left(\frac{1+\lambda}{\lambda}\right)-1\right\}$, and this results in $p\left(\frac{\lambda}{1+\lambda},w\right)\le p\left(\frac{\lambda}{1+\lambda},1\right) < 0$.

  In light of the monotonicity of $w_i$'s, for every fixed
  $\mathbf{z}$, $g_{\mathbf{z}}(\mathbf{w})$ achieves its maximum when
  $\mathbf{w}=\mathbf{1}$. Thus
  \[
  \max_{\mathbf{t}\in\left[0,\frac{\lambda}{1+\lambda}\right]^d}g(\mathbf{t},\mathbf{w})
  =\max_{\substack{\mathbf{z}=(z_1,\dots,z_d)\\\forall i\in[d],
      z_i\in\left[1-\tuple{\frac{\lambda}{1+\lambda}}^{w_i-1},1\right]}}g_{\mathbf{z}}(\mathbf{w})
  \le\max_{\substack{\mathbf{z}=(z_1,\dots,z_d)\\\forall i\in[d],
      z_i\in\left[1-\tuple{\frac{\lambda}{1+\lambda}}^{w_i-1},1\right]}}g_{\mathbf{z}}(\mathbf{1})
  \le \max_{\mathbf{z}\in[0,1]^d}g_{\mathbf{z}}(\mathbf{1}).
  \]

  Actually, the case that all $w_i$'s are 1 corresponds to counting
  weighted independent sets on normal graphs and arguments to bound
  $g_{\mathbf{z}}(\mathbf{1})$ can be found in \cite{LLY13}. For the sake of completeness, we give a
  proof of $g_{\mathbf{z}}(\mathbf{1})\le \alpha<1$ (Lemma
  \ref{lem:ind-decay-tech}) in appendix.
\end{proof}

\begin{lemma}\label{lem:ind-decay-tech}
  Let $\Delta>1, $ be a constant. Assume
  $\lambda<\lambda_c=\frac{(\Delta-1)^{\Delta-1}}{(\Delta-2)^{\Delta}}$
  be a constant and $d<\Delta$. Then for some constant $\alpha<1$,
  $g_{\mathbf{z}}(\mathbf{1})\le\alpha<1$ where
  $\mathbf{z}=(z_1,\dots,z_d)\in[0,1]^d$.
\end{lemma}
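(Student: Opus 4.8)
The plan is to first unwind the expression $g_{\mathbf z}(\mathbf 1)$. Specializing \eqref{eq:upmid2} to $\mathbf w=\mathbf 1$: when $w_i=1$ the change of variables $t_i=(1-z_i)\tuple{\frac{1+\lambda}{\lambda}}^{w_i-1}$ becomes $t_i=1-z_i$, every factor $w_i^c$ equals $1$, and the term $\frac{w_i-1}{\sqrt{\lambda+\lambda^2}}$ vanishes, so the $i$-th summand collapses to $\frac{1-z_i}{z_i}\cdot\frac{1-t_i}{\sqrt{t_i}}=\frac{1-z_i}{z_i}\cdot\frac{z_i}{\sqrt{1-z_i}}=\sqrt{1-z_i}$. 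Hence
\[
g_{\mathbf z}(\mathbf 1)=\sqrt{\frac{\lambda\prod_{i=1}^d z_i}{1+\lambda\prod_{i=1}^d z_i}}\;\sum_{i=1}^d\sqrt{1-z_i},
\]
and the task is to bound this by a constant $\alpha<1$ uniformly over $[0,1]^d$. Since the right-hand side is continuous on the compact cube, it suffices to show it is strictly below $1$ everywhere (with an explicit uniform bound).

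The second step is to collapse to a single real variable. Let $P\defeq\prod_i z_i\in[0,1]$; the prefactor $\sqrt{\lambda P/(1+\lambda P)}$ depends on $\mathbf z$ only through $P$ and is increasing in $P$, so for fixed $P$ it is enough to maximize $\sum_i\sqrt{1-z_i}$ over $\set{z\in[0,1]^d:\prod_i z_i=P}$. The map $u\mapsto\sqrt{1-e^{u}}$ is concave on $(-\infty,0)$ — its second derivative is $-\tfrac12 e^{u}(1-e^{u})^{-1/2}-\tfrac14 e^{2u}(1-e^{u})^{-3/2}<0$ — so writing $u_i=\log z_i$ and applying Jensen gives $\sum_i\sqrt{1-z_i}\le d\sqrt{1-P^{1/d}}$ (the case $P=0$ being trivial). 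Substituting $x=P^{1/d}\in[0,1]$ yields
\[
g_{\mathbf z}(\mathbf 1)\le G(x)\defeq d\sqrt{\frac{\lambda x^{d}(1-x)}{1+\lambda x^{d}}}.
\]

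The third step is to bound $G$. Squaring and rearranging, $G(x)<1$ is equivalent to $\lambda\,\phi(x)<1$ with $\phi(x)\defeq x^{d}\bigl(d^{2}(1-x)-1\bigr)$; this is automatic where $\phi(x)\le 0$ and otherwise requires $\lambda<1/\phi(x)$. Differentiating, $\phi'(x)=d(d+1)x^{d-1}\bigl((d-1)-dx\bigr)$, so $\phi$ attains its maximum on $[0,1]$ at $x^{*}=(d-1)/d$, with $\phi(x^{*})=\tfrac{(d-1)^{d+1}}{d^{d}}$. Thus $G(x)<1$ for all $x\in[0,1]$ precisely when $\lambda<\lambda_c(d)\defeq\frac{d^{d}}{(d-1)^{d+1}}$, the uniqueness threshold of the infinite $d$-ary tree. (For $d=1$ one has $\phi\le 0$, so the bound is unconditional; for $d=0$, $g_{\mathbf z}(\mathbf 1)=0$.)

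Finally I would match this to the hypothesis. Here $d<\Delta$, i.e. $d\le\Delta-1$, and $\lambda_c(d)=\frac{d^{d}}{(d-1)^{d+1}}$ is decreasing in $d$: treating $d$ as real, $\log\lambda_c(d)=d\log d-(d+1)\log(d-1)$ has derivative $\log\frac{d}{d-1}-\frac{2}{d-1}$, which is negative since $\log\bigl(1+\tfrac{1}{d-1}\bigr)<\tfrac{1}{d-1}$. Hence $\lambda_c(d)\ge\lambda_c(\Delta-1)=\frac{(\Delta-1)^{\Delta-1}}{(\Delta-2)^{\Delta}}=\lambda_c$, so $\lambda<\lambda_c\le\lambda_c(d)$ and therefore $G(x)<1$ on $[0,1]$; by continuity of $G$ on this compact interval, $\alpha\defeq\max_{x\in[0,1]}G(x)<1$, which gives $g_{\mathbf z}(\mathbf 1)\le\alpha<1$ for every $\mathbf z\in[0,1]^d$. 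The step I expect to be the crux is the reduction to the single variable $x$ — choosing the right convexity substitution and getting the Jensen inequality pointed the correct way — together with the exact evaluation $\max_{[0,1]}\phi=\tfrac{(d-1)^{d+1}}{d^d}$, which is exactly what makes the threshold land on $\lambda_c$; the monotonicity of $\lambda_c(d)$ in $d$ is then a routine estimate.
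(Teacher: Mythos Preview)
Your proposal is correct and follows essentially the same approach as the paper: unwind $g_{\mathbf z}(\mathbf 1)$, apply Jensen (concavity of $u\mapsto\sqrt{1-e^u}$) to reduce to a single variable $x=P^{1/d}$, and then show the resulting one-variable function is strictly below $1$ with the threshold landing on $\lambda_c(d)=d^d/(d-1)^{d+1}$. The only cosmetic difference is that the paper finishes the one-variable step by first replacing $\lambda$ with $\lambda_c'=\lambda_c(d)$, locating the maximizer $\hat z=(d-1)/d$ via the derivative, and then identifying the value at that point with $|f'_{\lambda_c',d}(\hat x)|=1$, whereas you rewrite $G(x)<1$ as $\lambda\,\phi(x)<1$ and maximize $\phi$ directly; both computations are equivalent and yield the same critical point and threshold.
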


We are now ready to prove the main lemma.

\begin{proof}[Proof of Lemma \ref{lem:ind-decay}]
    Let $\Phi(x)=\frac{1}{\sqrt{x(1+x)}}$ and
    $\phi(x)=\int\Phi(x)\,\mathrm{d}x=2\sinh^{-1}(\sqrt{x})$. We first
    apply induction on $\ell\defeq \max\set{0,L}$ to show that, if
    $d<\Delta$, then $\abs{\phi(R_v)-\phi(R(G,v,L))}\le 2\sqrt{\lambda}
    \alpha^L$ for some constant $\alpha<1$.
    
    If $\ell=0$, note that $R_v\in[0,\lambda]$, thus
    $\abs{\phi(R(G,v,L))-\phi(R_v)}\le 2\sqrt{\lambda}$. We now assume
    $L=\ell>0$ and the lemma holds for smaller $\ell$. For every
    $i\in[d]$ and $j\in [w_i]$, we denote $x_{ij}=R_{ij}$ and $\hat
    x_{ij}=R(G_{ij},v_{ij},L-\lfloor 1+c\log_{1/\alpha}
    w_i\rfloor)$. Let $\mathbf{x}=(x_{ij})_{i\in[d],j\in[w_i]}$,
    $\mathbf{\hat x}=(\hat x_{ij})_{i\in[d],j\in[w_i]}$. Let
    $f=\lambda\prod_{i=1}^d\tuple{1-\prod_{j=1}^{w_i}\frac{x_{ij}}{1-x_{ij}}}$
    , then it follows from Proposition \ref{prop:tech} that for some
    $\mathbf{\tilde x}=(\tilde x_{ij})_{i\in[d],j\in[w_i]}$ with each
    $\tilde x_{ij}\in[0,\lambda]$
    \begin{align*}
    \abs{\phi(R_v)-\phi(R(G,v,L))}
    &\le\sum_{i=1}^d\sum_{j=1}^{w_i}\frac{\Phi(f)}{\Phi(\tilde
        x_{ij})}\abs{\frac{\partial f(\mathbf{\tilde x})}{\partial
            x_{ij}}}\cdot\abs{\phi(x_{ij})-\phi(\hat x_{ij})} \\
    &\overset{(\spadesuit)}{\le}
    2\sqrt{\lambda}\sum_{i=1}^d\sum_{j=1}^{w_i}
    \frac{\Phi(f)}{\Phi(\tilde x_{ij})}\abs{\frac{\partial
            f(\mathbf{\tilde x})}{\partial
            x_{ij}}}\alpha^{L-\lfloor 1+c\log_{1/\alpha} w_i\rfloor}\\
    &\overset{(\heartsuit)}{\le} 2\sqrt{\lambda}\alpha^L.
    \end{align*}
    $(\spadesuit)$ follows from the induction hypothesis and
    $(\heartsuit)$ is due to Lemma \ref{lem:derivative}.
    
    The case that $d=\Delta$ can only happen at the root of our
    computational tree. Following the arguments in the proofs of
    \ref{lem:derivative}, \ref{lem:ind-decay-tech} and the bound in
    \eqref{eq:jensen}, it is easy to see that a universal constant
    upper bound for the error contraction exists, i.e.,
    \[
    \sum_{i=1}^{\Delta}\sum_{j=1}^{w_i}\frac{\Phi(f)}{\Phi(\tilde
        x_{ij})}\abs{\frac{\partial f(\mathbf{\tilde x})}{\partial
            x_{ij}}}<\max_{z\in[0,1]}\sqrt{\frac{\Delta^2(\Delta-1)^{\Delta-1}z^\Delta(1-z)}{(\Delta-2)^\Delta+(\Delta-1)^{\Delta-1}z^\Delta}}<
    3.
    \]
    Thus $\abs{\phi(R_v)-\phi(R(G,v,L))}\le 6\sqrt{\lambda} \alpha^L$
    for every $v$.
    
    Then the lemma follows from Proposition \ref{prop:tech}, since
    \begin{align*}
    \abs{R_v-R(G,v,L)}
    &=\frac{1}{\Phi(\tilde x)}\cdot\abs{\phi(R_v)-\phi(R(G,v,L))}&\mbox{for some $\tilde x\in[0,\lambda]$}\\
    &\le 6\lambda\sqrt{1+\lambda}\cdot\alpha^L
    \end{align*}
\end{proof}

\bibliographystyle{alpha} \bibliography{refs}

\appendix


\section{Proof of Theorem \ref{thm:ind-main}}

\begin{proof}
    The input of the FPTAS is an instance $G(V,\mathcal{E})$ and an
    accuracy parameter $0<\eps<1/2$.  Assume
    $V=\set{v_1,\dots,v_n}$. Note that $I=\varnothing$ is an independent
    set of $G$ with $w(I)=1$. Therefore
    \[
    Z(G)=1/\Pr[G]{I}=\tuple{\Pr[G]{\bigwedge_{i=1}^nv_i\not\in
            I}}^{-1}=\tuple{\prod_{i=1}^n\Pr[G]{v_i\not\in I\mid
            \bigwedge_{j=1}^{i-1}v_j\not\in I}}^{-1}.
    \]
    For every $1\le i\le n$, we define a graph $G_i(V_i,E_i)$:
    \begin{itemize}
        \item $G_1\defeq G$;
        \item For every $i\ge 2$, $G_i\defeq G_{i-1}-v_{i-1}-\mathcal{E}'$
        where $\mathcal{E}'\defeq \set{e\in \mathcal{E}_{i-1}\mid
            v_{i-1}\in e}$ consists of edges in $G_{i-1}$ incident to $v_i$.
    \end{itemize}
    It is straightforward to verify that $\Pr[G]{v_i\not\in
        I\mid\bigwedge_{j=1}^{i-1}v_j\not\in I}=\Pr[G_i]{v_i\not\in I}$
    for every $1\le i\le n$. Thus,
    \[
    Z(G)=\prod_{i=1}^n\tuple{\Pr[G_i]{v_i\not\in
            I}}^{-1}=\prod_{i=1}^n\tuple{1+R_i},
    \]
    where $R_i\defeq\frac{\Pr[G_i]{v_i\in I}}{\Pr[G_i]{v_i\not\in I}}$.
    Let $C$ and $\alpha$ be constants in Lemma \ref{lem:ind-decay}. We
    compute $R(G_i,v_i,L)$ with
    $L=\frac{\log\tuple{2Cn\eps^{-1}}}{\log\alpha^{-1}}$ for every $1\le
    i\le n$, then
    \[
    \abs{R_i-R(G_i,v_i,L)}\le\frac{\eps}{2n}.
    \]
    This implies
    \[
    1-\frac{\eps}{2n}\le \frac{1+R_i}{1+R(G,v_i,L)}\le
    1+\frac{\eps}{2n}.
    \]
    Let $\hat Z=\prod_{i=1}^n\tuple{1+R(G_i,v_i,L)}^{-1}$ be our estimate
    of the partition function, then it holds that
    \[
    e^{-\eps}\le \frac{Z(G)}{\hat Z}\le e^{\eps}.
    \]
    It remains to bound the running time of our algorithm. Let $T(L)$
    denote the maximum running time of computing $R(G,v,L)$ (over all
    choices of $d\le\Delta$ and arbitrary $w_i$). Then by the definition
    of $R(G,v,L)$, for every $L>0$,
    \[
    T(L)\le\sum_{i=1}^d\sum_{j=1}^{w_i}T(L-\lfloor
    1+c\log_{1/\alpha}w_i\rfloor) +O(n).
    \]
    It is easy to verify that
    $T(L)=n\Delta^{O(L)}=\tuple{\frac{n}{\eps}}^{O(\log\Delta)}$ for our choice of
    $L$. Thus our algorithm is an FPTAS for computing $Z(G)$.
\end{proof}

\section{Proof of Lemma \ref{lem:ind-recursion}}

\begin{proof}
    
    By the definition of $R_v$, we have
    \begin{align*}
    R_v =\frac{\Pr[G]{v\in I}}{\Pr[G]{v\not\in I}}
    =\lambda\cdot\frac{\Pr[G']{\bigwedge_{i=1}^dv_i\in
            I}}{\Pr[G']{\bigwedge_{i=1}^dv_i\not\in I}}
    =\lambda\cdot\prod_{i=1}^d\frac{\Pr[G']{v_i\in
            I\land\bigwedge_{j=1}^{i-1}v_j\not\in
            I\land\bigwedge_{j=i+1}^d v_j\in I}}{ \Pr[G']{v_i\not\in
            I\land\bigwedge_{j=1}^{i-1}v_j\not\in
            I\land\bigwedge_{j=i+1}^d v_j\in I} }
    \end{align*}
    For every $i\in[d]$, define $G_i\defeq G'-\set{v_k\mid i< k\le
        d}-\set{e_k\mid 1\le k<i}$, we have
    \[
    \frac{\Pr[G']{v_i\in I\land\bigwedge_{j=1}^{i-1}v_i\not\in
            I\land\bigwedge_{j=i+1}^d v_j\in I}}{ \Pr[G']{v_i\not\in
            I\land\bigwedge_{j=1}^{i-1}v_j\not\in I\land\bigwedge_{j=i+1}^d
            v_j\in I} } =\frac{\Pr[G']{v_i\in I\mid
            \bigwedge_{j=1}^{i-1}v_j\not\in I\land\bigwedge_{j=i+1}^d v_j\in
            I}}{ \Pr[G']{v_i\not\in I\mid \bigwedge_{j=1}^{i-1}v_j\not\in
            I\land\bigwedge_{j=i+1}^d v_j\in I}} = \frac{\Pr[G_i]{v_i\in
            I}}{\Pr[G_i]{v_i\not\in I}}.
    \]
    This is because fixing $v_j\in I$ is equivalent to removing $v_j$
    from the graph and fixing $v_j\not\in I$ is equivalent to removing
    all edges incident to $v_j$ from the graph.
    
    Since $e_i$ is the unique hyperedge in $G_i$ that contains $v_i$, we have
    \[
    \frac{\Pr[G_i]{v_i\in
            I}}{\Pr[G_i]{v_i\not\in I}}
    =1-\Pr[G_i-v_i-e_i]{\bigwedge_{j=1}^{w_i}v_{ij}\in I}
    =1-\prod_{j=1}^{w_i}\Pr[G_{ij}]{v_{ij}\in I}
    =1-\prod_{j=1}^{w_i}\frac{R_{ij}}{1+R_{ij}}.
    \]
\end{proof}

\section{Proof of Lemma \ref{lem:ind-decay-tech}}

\begin{proof}
    Let $\lambda_c'\defeq \frac{d^d}{(d-1)^{d+1}}$ be the uniqueness
    threshold for the $d$-ary tree. Then $\lambda<\lambda_c\le
    \lambda_c'$.
    
    Plugging $\mathbf{w}=\mathbf{1}$ into \eqref{eq:upmid2}, we have
    \[
    g_{\mathbf{z}}(\mathbf{1})
    =\sqrt{\frac{\lambda\prod_{i=1}^dz_i}{1+\lambda\prod_{i=1}^dz_i}}\cdot\sum_{i=1}^d\sqrt{1-z_i}.
    \]
    Let $z=\tuple{\prod_{i=1}^dz_i}^{\frac{1}{d}}$, it follows from
    Jensen's inequality that
    \begin{equation}
    \label{eq:jensen}
    g(\mathbf{t},\mathbf{1})\le d\sqrt{\frac{\lambda z^d(1-z)}{1+\lambda
            z^d}} < d\sqrt{\frac{\lambda_c'z^d(1-z)}{1+\lambda_c'z^d}}
    \end{equation}
    Recall that $f_{\lambda,d}(x)=\lambda\tuple{\frac{1}{1+x}}^d$. Let
    $\hat x$ be the positive fixed-point of $f_{\lambda_c',d}(x)$ and
    $\hat z=\frac{1}{1+\hat x}$. We show that
    $d\sqrt{\frac{\lambda_c'z^d(1-z)}{1+\lambda_c'z^d}}$ achieves its
    maximum when $z=\hat z$. The derivative of
    $\frac{\lambda_c'z^d(1-z)}{1+\lambda_c'z^d}$ with respect to $z$ is
    \[
    \tuple{\frac{\lambda_c'z^d(1-z)}{1+\lambda_c'z^d}}'=
    -\frac{\lambda_c'z^{d-1}}{\tuple{1+\lambda_c'z^d}^2}\tuple{z+\lambda_c'z^{d+1}-d(1-z)}.
    \]
    Since $\lambda_c'=\frac{d^d}{(d-1)^{d+1}}$, the above achieves
    maximum at $\tilde z=\frac{d-1}{d}$. If we let $\tilde
    x=\frac{1-\tilde z}{\tilde z}$, then it is easy to verify that
    $f_{\lambda_c',d}(\tilde x)=\tilde x$, which implies $\hat z=\tilde
    z$ because of the uniqueness of the positive fixed-point.
    
    Therefore, we have for some $\alpha<1$,
    \[
    g_{\mathbf{z}}(\mathbf{1})\le \alpha<d\sqrt{\frac{\lambda_c'\hat
            z^d(1-\hat z)}{1+\lambda_c'\hat z^d}}
    =\abs{f_{\lambda_c',d}'(\hat x)}=1.
    \]
\end{proof}

\section{FPTAS for Hypergraph Ising Model}\label{sec:ising-recursion}



\subsection{The algorithm and recursion for computing marginal probability}

In this section, we give the recursion function and design an algorithm to compute marginal probability in hypergraph Ising Model. Then we prove that the algorithm is indeed an FPTAS for any instance $G(V,\mathcal{E},(\bm\beta,\bm\gamma))$ if $1-\frac{2}{2e^{-1/2}\Delta+3}\le\bm\beta(e),\bm\gamma(e)\le 1$.

\paragraph{Graph Operations} 
We first fix some notations on graph
modification specific to our hypergraph two state spin model. Let
$G(V,\mathcal{E},(\bm\beta,\bm\gamma))$ be an instance. The first
groups of operations are about vertex removal and edge removal, which
is similar to the case of hardcore model:
\begin{itemize}
\item For every $v\in V$, we denote
  $G-v\defeq(V\setminus\set{v},\mathcal{E}',(\bm\beta',\bm\gamma'))$
  where $\mathcal{E}'\defeq\set{e-v\mid e\in \mathcal{E}}$ and
  $\bm\beta'(e-v)=\bm\beta(e)$, $\bm\gamma'(e-v)=\bm\gamma(e)$ for
  every $e\in\mathcal{E}$..
\item For every $e\in \mathcal{E}$, we denote
  $G-e\defeq(V,\mathcal{E}\setminus\set{e},(\bm\beta,\bm\gamma))$.
\item Let $x$ be a vertex or an edge and $y$ be a vertex or an edge,
  we denote $G-x-y\defeq (G-x)-y$.
\item Let $S=\set{v_1,\dots,v_k}\subseteq V$, we denote $ G-S\defeq
  G-v_1-v_2\cdots-v_k.  $
\item Let $\mathcal{F}=\set{e_1,\dots,e_k}\subseteq \mathcal{E}$, we
  denote $ G-\mathcal{F}\defeq G-e_1-e_2\cdots-e_k.$
\end{itemize}

The second group of operations is about \emph{pinning} the value of a
vertex, whose effect is to change $\bm\beta(e)$ or $\bm\gamma(e)$ for
edge $e$ incident to it.
\begin{itemize}
\item Let $v\in V$ be a vertex, we denote $G|_{v=0}\defeq
  (V\setminus\set{v},\mathcal{E}',(\bm\beta',\bm\gamma')_{e\in
    \mathcal{E}'})$ where $\mathcal{E}'\defeq \set{e-v\mid e\in
    \mathcal{E}}$, $\bm\beta'(e-v)\defeq \bm\beta(e)$ for every
  $e\in\mathcal{E}$ and
  \[
  \bm\gamma'(e-v)\defeq
  \begin{cases}
    \bm\gamma(e) & \mbox{ if } v\not\in e\\
    1 & \mbox{ otherwise.}
  \end{cases}
  \]
  This operation is to pin the value of $v$ to $0$.
\item Similarly, for a vertex $v\in V$, we denote $G|_{v=1}\defeq
  (V\setminus\set{v},\mathcal{E}',(\bm\beta',\bm\gamma'))$ where
  $\mathcal{E}'\defeq \set{e\setminus\set{v}\mid e\in \mathcal{E}}$,
  $\bm\gamma'(e-v)\defeq \bm\gamma(e)$ for every $e\in\mathcal{E}$ and
  \[
  \bm\beta'(e-v)\defeq
  \begin{cases}
    \bm\beta(e) &\mbox{if } v\not\in e\\
    1 &\mbox{otherwise.}
  \end{cases}
  \]
  This operation is to pin the value of $v$ to $1$.
\item Let $u,v\in V$ be two vertices and $i,j\in\set{0,1}$. We denote
  $G|_{u=i,v=j}\defeq \tuple{G|_{u=i}}|_{v=j}$ and this notation
  generalizes to more vertices.
\item Let $S=\set{v_1,\dots,v_k}\subseteq V$, we use $
  G|_{S=\mathbf{0}}$ (resp. $G|_{S=\mathbf{1}})$ to denote
  $G|_{v_1=0,v_2=0,\dots,v_k=0}$ and $G|_{v_1=1,v_2=1,\dots,v_k=1}$.
\end{itemize}


Let $G(V,\mathcal{E},(\bm\beta,\bm\gamma))$ be an instance and $v\in
V$ be an arbitrary vertex with degree $d$. Let
$E(v)=\set{e_1,\dots,e_d}$ be the set of edges incident to $v$. For
every $i\in[d]$, we assume $e_i=\set{v}\cup\set{v_{ij}\mid j\in[w_i]}$
consists of $w_i+1$ vertices where $w_i\ge 0$.

We define a new instance
$G'(V',\mathcal{E}',(\bm\beta_e',\bm\gamma_e'))$ which is obtained
from $G$ by replacing $v$ by $d$ copies of itself and each $e_i$
contains a distinct copy.  Formally,
$V'\defeq(V\setminus\set{v})\cup\set{v_1,\dots,v_d}$ and
$\mathcal{E}'\defeq\tuple{\mathcal{E}\setminus E(v)} \cup
\set{e_i-v+v_i\mid i\in[d]}$; $\bm\beta'(e)=\bm\beta(e)$,
$\bm\gamma'(e)=\bm\gamma(e)$ for every $e\in \mathcal{E}\setminus
E(v)$ and $\bm\beta'(e_i-v+v_i)=\bm\beta(e_i)$,
$\bm\gamma'(e_i-v+v_i)=\bm\gamma(e_i)$ for every $i\in[d]$.

For every $i\in[d]$ and $j\in[w_i]$, we define two smaller instances
$G^0_{ij}\tuple{V^0_{ij},\mathcal{E}^0_{ij},(\bm\beta_{ij}^0,\bm\gamma_{ij}^0)}$
and
$G^1_{ij}\tuple{V^1_{ij},\mathcal{E}^1_{ij},(\bm\beta_{ij}^1,\bm\gamma_{ij}^1)}$:

\begin{itemize}
\item $G^0_{ij}\defeq
  (((G'|_{V_{<i}=\mathbf{0}})|_{V_{>i}=\mathbf{1}})-e_i)|_{V^i_{<j}=\mathbf{0}}$;
\item $G^1_{ij}\defeq
  (((G'|_{V_{<i}=\mathbf{0}})|_{V_{>i}=\mathbf{1}})-e_i)|_{V^i_{<j}=\mathbf{1}}$,
\end{itemize}
where $V_{<i}\defeq \set{v_k\mid k<i}$, $V_{>i}\defeq \set{v_k\mid
  k>i}$, $V^i_{<j}\defeq \set{v_{ik}\mid k<j}$.

We now define the ratio of the probability on instances defined above.
\begin{itemize}
\item Let $R_v\defeq \frac{\Pr[G]{\sigma(v)=1}}{\Pr[G]{\sigma(v)=0}}$;
\item For every $i\in[d]$ and $j\in[w_i]$, let
  $R_{ij}^0\defeq\frac{\Pr[G_{ij}^0]{\sigma(v_{ij})=1}}{\Pr[G_{ij}^0]{\sigma(v_{ij})=0}}$,
  $R_{ij}^1\defeq\frac{\Pr[G_{ij}^1]{\sigma(v_{ij})=1}}{\Pr[G_{ij}^1]{\sigma(v_{ij})=0}}$.
\end{itemize}
It follows from our definition that for every $i\in[d]$,
$R_{i1}^0=R_{i1}^1$.

We can compute $R_v$ by the following recursion:
\begin{lemma}\label{lem:ising-recursion}
  \begin{equation}
    R_v=\lambda
    \cdot\prod_{i=1}^{d}\frac{1-(1-\bm\gamma(e_i))\frac{R_{i1}^0}{1+R_{i1}^0}\prod_{j=2}^{w_i}\frac{R_{ij}^1}{1+R_{ij}^1}}{1-(1-\bm\beta(e_i))\frac{1}{1+R_{i1}^0}\prod_{j=2}^{w_i}\frac{1}{1+R_{ij}^0}}.  \label{eq:ising-recursion}
  \end{equation}

\end{lemma}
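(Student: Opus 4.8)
The plan is to establish the recursion in Lemma~\ref{lem:ising-recursion} by the same telescoping strategy used for the hardcore model in Lemma~\ref{lem:ind-recursion}, tracking carefully how pinning interacts with the edge weights $\bm\beta$ and $\bm\gamma$. First I would write $R_v = \lambda \cdot \frac{\Pr[G']{\bigwedge_{i=1}^d \sigma(v_i)=1}}{\Pr[G']{\bigwedge_{i=1}^d \sigma(v_i)=0}}$, using that splitting $v$ into $d$ copies (one per incident edge) does not change the partition function and that the external field $\lambda$ factors out exactly as in the hardcore case. Then I telescope the numerator and denominator over $i=1,\dots,d$, so that the $i$-th factor conditions on $\sigma(v_k)=0$ for $k<i$ and $\sigma(v_k)=1$ for $k>i$; conditioning on these values is realized by the pinning operations, giving precisely the instances $(G'|_{V_{<i}=\mathbf 0})|_{V_{>i}=\mathbf 1}$ in which $e_i$ is the unique edge still containing $v_i$.

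Next I would analyze a single factor, i.e. $\frac{\Pr[H]{\sigma(v_i)=1}}{\Pr[H]{\sigma(v_i)=0}}$ where $H = (G'|_{V_{<i}=\mathbf 0})|_{V_{>i}=\mathbf 1}$ and $e_i = \{v_i, v_{i1},\dots,v_{iw_i}\}$ is the only edge containing $v_i$. Here the two states of $v_i$ differ only through the weight contributed by $e_i$ together with the field $\lambda$ at $v_i$: pinning $v_i=1$ multiplies the configuration weight by $\lambda$ and by $\bm\gamma(e_i)$ on the event that all of $v_{i1},\dots,v_{iw_i}$ are also $1$ (and by $1$ otherwise), while pinning $v_i=0$ multiplies by $\bm\beta(e_i)$ on the event that all $v_{ij}$ are $0$ (and by $1$ otherwise). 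So the numerator is $\lambda\big(Z(H-v_i-e_i) - (1-\bm\gamma(e_i))\cdot Z\big((H-v_i-e_i)|_{\text{all }v_{ij}=1}\big)\big)$ and similarly for the denominator with $\bm\beta$ and all $v_{ij}=0$; dividing gives $\lambda\cdot\frac{1-(1-\bm\gamma(e_i))\Pr[H-v_i-e_i]{\bigwedge_j \sigma(v_{ij})=1}}{1-(1-\bm\beta(e_i))\Pr[H-v_i-e_i]{\bigwedge_j \sigma(v_{ij})=0}}$.

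Finally I would convert the two joint probabilities into products. In $H-v_i-e_i$ the vertices $v_{i1},\dots,v_{iw_i}$ no longer share any edge (each originally only met $v_i$ through $e_i$, now deleted), so I can peel them off one at a time: $\Pr[H-v_i-e_i]{\bigwedge_{j=1}^{w_i}\sigma(v_{ij})=1} = \prod_{j=1}^{w_i}\Pr{\sigma(v_{ij})=1 \mid \sigma(v_{ik})=1,\ k<j}$, and conditioning $v_{ik}=1$ for $k<j$ is again a pinning, producing exactly the instance $G^1_{ij}$ from the statement; thus this factor equals $\prod_{j=1}^{w_i}\frac{R^1_{ij}}{1+R^1_{ij}}$. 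Symmetrically the $\bm\beta$-term becomes $\prod_{j=1}^{w_i}\frac{1}{1+R^0_{ij}}$. Collecting over all $i$ yields \eqref{eq:ising-recursion}, after noting that for $j=1$ no pinning among the $v_{ik}$ has yet occurred so $R^0_{i1}=R^1_{i1}$, which is why the $j=1$ factor is written with $R^0_{i1}$ in both numerator and denominator. The main obstacle I anticipate is purely bookkeeping: making sure the pinnings are applied in a consistent order so that the edge $e_i$'s weight is untouched until the factor for $v_i$ is isolated, and that removing $e_i$ before peeling off the $v_{ij}$'s is what decouples them — getting the order of $-e_i$ versus the $|_{V^i_{<j}}$ pinnings right is the only place a careless argument could go wrong.
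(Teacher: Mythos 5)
Your proposal follows exactly the paper's route: split $v$ into $d$ copies, telescope over $i$ so that each factor becomes a marginal ratio in the pinned instance $G_i=(G'|_{V_{<i}=\mathbf 0})|_{V_{>i}=\mathbf 1}$ in which $e_i$ is the sole edge containing $v_i$, rewrite that factor as $\frac{1-(1-\bm\gamma(e_i))\Pr{\text{all $v_{ij}$ are }1}}{1-(1-\bm\beta(e_i))\Pr{\text{all $v_{ij}$ are }0}}$, and then peel the joint probability into a product of conditionals realized by the further pinnings $G^1_{ij}, G^0_{ij}$, with $R^0_{i1}=R^1_{i1}$ absorbing the $j=1$ term. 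Two small slips should be fixed. First, you double-count the external field: you pull a $\lambda$ out in the opening identity (``factors out exactly as in the hardcore case,'' which presumes the copies $v_i$ carry no field in $G'$), yet you also multiply each single-edge ratio by $\lambda$ in the next step; as written this produces $\lambda^{d+1}$ rather than the single $\lambda$ of \eqref{eq:ising-recursion}. You must commit to one convention: either the $v_i$ are field-free and no $\lambda$ appears in the per-edge factors (the paper's choice), or each $v_i$ carries field $\lambda$ and the opening prefactor must then be $\lambda^{1-d}$, not $\lambda$. Second, the justification that after deleting $e_i$ ``the vertices $v_{i1},\dots,v_{iw_i}$ no longer share any edge'' is false in general — they may co-occur in other hyperedges of $G$ — but it is also unnecessary: the chain-rule telescoping $\Pr{\bigwedge_j\sigma(v_{ij})=1}=\prod_j\Pr{\sigma(v_{ij})=1\mid\sigma(v_{ik})=1,\ k<j}$ requires no independence, and the conditionings are precisely what the pinnings $G^1_{ij}$ encode, which is how the paper argues.
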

\begin{proof}
  By the definition of $R_v$, we have
  \[
  R_v=\frac{\Pr[G]{\sigma(v)=1}}{\Pr[G]{\sigma(v)=0}}
  =\lambda\cdot\frac{\Pr[G']{\bigwedge_{i=1}^d\sigma(v_i)=1}}{\Pr[G']{\bigwedge_{i=1}^d\sigma(v_i)=0}}
  =\lambda\cdot\prod_{i=1}^d\frac{\Pr[G']{\sigma(v_i)=1\land\bigwedge_{j=1}^{i-1}\sigma(v_j)=0\land\bigwedge_{j=i+1}^{d}\sigma(v_j)=1}}{\Pr[G']{\sigma(v_i)=0\land\bigwedge_{j=1}^{i-1}\sigma(v_j)=0\land\bigwedge_{j=i+1}^{d}\sigma(v_j)=1}}.
  \]
  For every $i\in[d]$, define $G_i\defeq
  (G'|_{V_{<i}=\mathbf{0}})|_{V_{>i}=\mathbf{1}}$, then
  \begin{align*}
    \frac{\Pr[G']{\sigma(v_i)=1\land\bigwedge_{j=1}^{i-1}\sigma(v_j)=0\land\bigwedge_{j=i+1}^{d}\sigma(v_j)=1}}{\Pr[G']{\sigma(v_i)=0\land\bigwedge_{j=1}^{i-1}\sigma(v_j)=0\land\bigwedge_{j=i+1}^{d}\sigma(v_j)=1}}
    &=
    \frac{\Pr[G']{\sigma(v_i)=1\mid\bigwedge_{j=1}^{i-1}\sigma(v_j)=0\land\bigwedge_{j=i+1}^{d}\sigma(v_j)=1}}{\Pr[G']{\sigma(v_i)=0\mid\bigwedge_{j=1}^{i-1}\sigma(v_j)=0\land\bigwedge_{j=i+1}^{d}\sigma(v_j)=1}}\\
    &=\frac{\Pr[G_i]{\sigma(v_i)=1}}{\Pr[G_i]{\sigma(v_i)=0}}.
  \end{align*}
  Since $e_i$ is the unique edge in $G_i$ that contains $v_i$, we have
  \begin{align*}
    \frac{\Pr[G_i]{\sigma(v_i)=1}}{\Pr[G_i]{\sigma(v_i)=0}}
    &=\frac{\bm\gamma(e_i)\Pr[G_i-v_i-e_i]{\bigwedge_{j=1}^{w_i}\sigma(v_{ij})=1}+\tuple{1-\Pr[G_i-v_i-e_i]{\bigwedge_{j=1}^{w_i}\sigma(v_{ij})=1}}}{\bm\beta(e_i)
      \Pr[G_i-v_i-e_i]{\bigwedge_{j=1}^{w_i}\sigma(v_{ij})=0}+\tuple{1-\Pr[G_i-v_i-e_i]{\bigwedge_{j=1}^{w_i}\sigma(v_{ij})=0}}}\\
    &=\frac{1-(1-\bm\gamma(e_i))\Pr[G_i-v_i-e_i]{\sigma(v_{i1})=1}\cdot\prod_{j=2}^{w_i}\Pr[G_i-v_i-e_i]{\sigma(v_{ij})=1\mid \bigwedge_{k=1}^{j-1}\sigma(v_k)=1}}{1-(1-\bm\beta(e_i))\Pr[G_i-v_i-e_i]{\sigma(v_{i1})=0}\cdot\prod_{j=2}^{w_i}\Pr[G_i-v_i-e_i]{\sigma(v_{ij})=0\mid \bigwedge_{k=1}^{j-1}\sigma(v_k)=0}}\\
    &=\frac{1-(1-\bm\gamma(e_i))\Pr[G^1_{i1}]{\sigma(v_{i1})=1}\cdot\prod_{j=2}^{w_i}\Pr[G^1_{ij}]{\sigma(v_{ij})=1}}{1-(1-\bm\beta(e_i))\Pr[G^0_{i1}]{\sigma(v_{i1})=0}\cdot\prod_{j=2}^{w_i}\Pr[G^0_{ij}]{\sigma(v_{ij})=0}}\\
    &=\frac{1-(1-\bm\gamma(e_i))\frac{R^0_{i1}}{1+R^0_{i1}}\prod_{j=2}^{w_i}\frac{R^1_{ij}}{1+R^1_{ij}}}{1-(1-\bm\beta(e_i))\frac{1}{1+R^0_{i1}}\prod_{j=2}^{w_i}\frac{1}{1+R^0_{ij}}}.
  \end{align*}
  The last equality is due to the fact that $R^0_{i1}=R^1_{i1}$.

\end{proof}

The algorithm description is similar to Section
\ref{sec:indset-algorithm}.  Let
$G(V,\mathcal{E},(\bm\beta,\bm\gamma))$ be an instance of our two
state spin model with maximum degree $\Delta$, and $v\in V$ be an
arbitrary vertex with degree $d$. Let $E(v)=\set{e_1,\dots,e_d}$
denote the set of edges incident to $v$. Define $G^0_{ij}$,
$G^1_{ij}$, $R_v$, $R^0_{ij}$, $R^1_{ij}$ as in Section
\ref{sec:ising-recursion}. Then the recursion
\eqref{eq:ising-recursion} gives a way to compute the marginal
probability $\Pr[G]{\sigma(v) =1}$ \emph{exactly}. However, an exact
evaluation of the recursion requires a computation tree with
exponential size. Thus we introduce the following truncated version of
the recursion, with respect to constants $c>0$ and $0<\alpha<1$.  Let
\begin{align*}
  f_G(\mathbf{r}) & = \lambda \cdot \prod_{i=1}^{d}
  \frac{1-(1-\bm\gamma(e_i))\frac{r_{i1}^0}{1+r_{i1}^0}\prod_{j=2}^{w_i}
    \frac{r_{ij}^1}{1+r_{ij}^1}}{1-(1-\bm\beta(e_i))\frac{1}{1+r_{i1}^0}\prod_{j=2}^{w_i}\frac{1}{1+r_{ij}^0}},
\end{align*}
where $\mathbf{r} \defeq ((r^0_{ij})_{1 \le j \le w_i}, (r^1_{ij})_{2
  \le j \le w_i})_{i \in [d]}$ with
\begin{align*}
  r^0_{ij} &= R(G^0_{ij},v_{ij},L-\lfloor
  1+c\log_{1/\alpha} w_i\rfloor)\\
  r^1_{ij} &= R(G^1_{ij},v_{ij},L-\lfloor 1+c\log_{1/\alpha}
  w_i\rfloor).
\end{align*}

We can describe our truncated recursion as follows:
\[
R(G,v,L) =
\begin{cases}
  f_G(\mathbf{r})  & \mbox{if } L > 0\\
  \lambda & \mbox{otherwise.}
\end{cases}
\]

The recursion can be directly used to compute $R(G,v,L)$ for any given
$L$ and it induces a truncated computation tree (with height $L$ in
some special
metric). 

We claim that $R(G,v,L)$ is a good estimate of $R_v$ for a suitable
choice of $c$ and $\alpha$, for those $(\bm\beta,\bm\gamma,\Delta)$
satisfying that $1 - \frac{2}{2e^{-1/2}\Delta + 3}
\le\bm\beta(e),\bm\gamma(e)\le 1$ for all $e\in\mathcal{E}$.

\begin{lemma}\label{lem:ising-derivative}
  Let $G(V,\mathcal{E}, (\bm\beta, \bm\gamma))$ be an instance of our
  generalized two state spin system model with maximum degree
  $\Delta\ge 2$. Let $v\in V$ be a vertex and assume
  $1 - \frac{2}{2e^{-1/2}\Delta + 3}\le\bm\beta(e),\bm\gamma(e)\le 1$
  for all $e\in\mathcal{E}$. There exist constants $C>0$ and
  $0<\alpha<1$ such that
  \[
  \abs{R(G,v,L)-R_v}\le C\cdot\alpha^{\max\{0,L\}}
  \]
  for every $L$.
\end{lemma}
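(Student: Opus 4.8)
The plan is to mirror the structure of the hardcore-model argument (Lemma~\ref{lem:ind-decay}), replacing the one-layer recursion by the two-layer recursion \eqref{eq:ising-recursion}. First I would fix a suitable potential function $\phi$ with $\Phi \defeq \phi'$, chosen so that the amortized decay rate is controllable; a natural guess, adapting the hardcore choice, is something like $\Phi(x) \propto \frac{1}{\sqrt{x(1+x)}}$ or a variant tuned to the parameters $\bm\beta,\bm\gamma$, but selecting this $\phi$ correctly is part of the work. Then, exactly as in Section~\ref{sec:ind-recursion}, I would prove by induction on $\ell \defeq \max\{0,L\}$ the stronger statement $\abs{\phi(R_v) - \phi(R(G,v,L))} \le C'\alpha^{\max\{0,L\}}$ for an appropriate constant $C'$, from which the lemma follows by Proposition~\ref{prop:tech}(1) since all the ratios involved lie in a bounded interval $[0,\lambda]$ (the marginal ratio is always at most $\lambda$ by definition of the external field).

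The inductive step is where the two-layer structure enters. Writing $R_v = f_G(\mathbf{R})$ where $\mathbf{R}$ collects the true ratios $R^0_{ij}, R^1_{ij}$, and $R(G,v,L) = f_G(\mathbf{r})$ with $\mathbf r$ the truncated ratios, I would apply Proposition~\ref{prop:tech}(2) to bound $\abs{\phi(f_G(\mathbf R)) - \phi(f_G(\mathbf r))}$ by $\sum \frac{\Phi(f)}{\Phi(\tilde r)}\abs{\partial f/\partial r}\cdot\abs{\phi(R)-\phi(r)}$, then invoke the induction hypothesis on each $G^0_{ij}, G^1_{ij}$ — noting that in each of those instances the recursion depth budget drops by $\lfloor 1 + c\log_{1/\alpha} w_i\rfloor$, which supplies the extra $w_i^c$-type gain needed to absorb large edges. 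The crux is therefore a derivative-bound lemma analogous to Lemma~\ref{lem:derivative}: I must show that
\[
\sum_{a=1}^d w_a^c \Bigl( \sum_{b=1}^{w_a} \frac{\Phi(f)}{\Phi(r^0_{ab})}\Bigl|\frac{\partial f}{\partial r^0_{ab}}\Bigr| + \sum_{b=2}^{w_a} \frac{\Phi(f)}{\Phi(r^1_{ab})}\Bigl|\frac{\partial f}{\partial r^1_{ab}}\Bigr| \Bigr) \le \alpha < 1
\]
uniformly over admissible $\mathbf r$ and all $w_i \in \mathbb Z^+$. Computing the partial derivatives of $f_G$ explicitly (each $r^0_{ab}$ appears only in the $i=a$ factor's denominator, each $r^1_{ab}$ only in that factor's numerator, plus the common $\frac{\partial f}{\partial f}$ prefactor), I expect the sum to factor through a single-edge quantity, which I would then bound by: (i) a Karamata-type symmetrization sending each inner block to an extremal configuration (one free coordinate, the rest at the boundary), reducing to a function of one variable $t_i$ and $w_i$ per edge; (ii) a monotonicity argument in $w_i$ showing the whole expression is non-increasing in each $w_i$, so it suffices to treat $w_i = 1$; and (iii) the normal-graph ($\mathbf w = \mathbf 1$) bound, which is where the uniqueness-type condition $\bm\beta(e),\bm\gamma(e) \ge 1 - \frac{2}{2e^{-1/2}\Delta+3}$ must be used — and this is precisely the step whose imperfection forces the constant $2e^{-1/2}$ rather than the tight $1$, since the two-layer tree is not of perfect efficiency.

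The main obstacle, as in \cite{LiuL15}, is step~(iii) together with the choice of potential: one must verify that the single-edge contraction factor, after the worst-case reduction, is strictly below $1$ for all degrees up to $\Delta$, and this requires a delicate estimate relating $\bm\beta,\bm\gamma$ to $\Delta$. Unlike the hardcore case, the numerator and denominator of \eqref{eq:ising-recursion} behave differently, so the symmetrization in step~(i) needs to be carried out separately for the $r^0$ and $r^1$ blocks, and the convexity facts underlying Karamata must be re-checked for the Ising recursion's building blocks. A secondary obstacle is confirming that the $w_i$-monotonicity of step~(ii) survives the modification by the $w_i^c$ factor, exactly as in the hardcore proof: one differentiates in $w_i$, extracts a $w_i^{c-1}$ factor, and must choose $c>0$ small enough (in terms of $\bm\beta,\bm\gamma,\Delta$) that the bracketed term stays negative — the analogue of the constraint $c < \min\{\ldots\}$ in Lemma~\ref{lem:derivative}. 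Once the derivative lemma is established, assembling the induction and passing from $\phi$-distance back to ordinary distance via Proposition~\ref{prop:tech}(1) is routine, and the FPTAS (Theorem~\ref{thm:twostate-main}) follows by the same telescoping-of-marginals argument used for Theorem~\ref{thm:ind-main}.
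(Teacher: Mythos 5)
Your high-level plan (induct on $L$ through a potential $\phi$, apply Proposition~\ref{prop:tech}, and reduce to a derivative-bound lemma with $w_i^c$ amortization) does match the paper's structure, but two of the concrete steps you sketch are wrong and would not go through. The first is the interval: your claim that ``the marginal ratio is always at most $\lambda$'' so all ratios lie in $[0,\lambda]$ is false for the two-spin model. From \eqref{eq:ising-recursion}, $R_v = \lambda\prod_i A_i/B_i$ with $A_i, B_i \in [\beta_c, 1]$, so $R_v$ ranges over $[\lambda\beta_c^{\Delta}, \lambda\beta_c^{-\Delta}]$ and can exceed $\lambda$. The paper's choice of potential is $\phi(x) = \ln x$, i.e.\ $\Phi(x) = 1/x$ (not the hardcore $1/\sqrt{x(1+x)}$ you float), and $\ln x$ has unbounded derivative near $0$; the strictly positive lower bound $\lambda\beta_c^\Delta$ is exactly what makes the base case and the final conversion via Proposition~\ref{prop:tech}(1) finite. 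Without correcting the interval your proof cannot even state the base case.

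The second, more substantial gap is in your steps (ii)--(iii). For Ising the worst case over $w$ is \emph{not} $w=1$, and the paper does not perform a monotonicity-in-$w$ reduction to normal graphs. In Lemma~\ref{lem:ising-bound-item}, the per-edge quantity $V_w(x,y,z)$ evaluated at its interior optimum $x=1/2$, $y=y_c$, $z=z_c$ equals $\beta\bigl(1-(1-\tfrac{1}{2\beta})/w\bigr)^{w}$, which is \emph{increasing} in $w$ and tends to $\beta e^{1/(2\beta)-1}$ (compare $V_1 = 1/2$ to this larger limit). The paper therefore reintroduces a threshold $\hat w$ (forced because the interior optimum $y_c, z_c$ exits the admissible box for large $w$), handles $w\le\hat w$ and $w>\hat w$ separately, and chooses $c$ and $\alpha$ from $\hat w$; the constant $e^{-1/2}$ in the hypothesis on $\bm\beta,\bm\gamma$ is exactly this large-$w$ limit, not an artifact of the $\mathbf{w}=\mathbf{1}$ case as you suggest. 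Relatedly, the inner-block symmetrization used in Lemma~\ref{lem:ising-decay} replaces $(y_{ij})_j$ and $(1-z_{ij})_j$ by their geometric means via AM--GM rather than a Karamata argument, because the numerator and denominator of the Ising recursion must be bounded in opposite directions. So while your outline of the induction and the appeal to a derivative lemma is right, the derivative lemma itself cannot be established the way you propose, and following the hardcore blueprint (monotone-in-$w$, reduce to $\mathbf w = \mathbf 1$) would leave the argument incomplete.
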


The Lemma \ref{lem:ising-derivative} will be proved in the next
section. The proof of Theorem \ref{thm:twostate-main} with this lemma
is almost identical to the proof of Theorem \ref{thm:ind-main} and
we omit it here.

\subsection{Correlation decay}


Again, the key is to bound the amortized decay rate as in the
following lemma.

\begin{lemma}
  \label{lem:ising-decay}
  Let $\Delta \ge 2$ be a constant integer. Suppose $d \le \Delta$ is
  a constant integer, $\lambda \in (0, 1), \beta_c =1 -
  \frac{2}{2e^{-1/2}\Delta+3}$ are constant real numbers and
  $\gamma_1, \gamma_2, \ldots \gamma_d, \beta_1, \beta_2, \ldots,
  \beta_d \in [\beta_c, 1]$. Let $w_1,\dots,w_d>0$ be integers and %
  \[f(\mathbf{r})=\lambda\prod_{i=1}^{d} \frac{1-(1 -
    \gamma_i)\frac{r_{i1}^0}{1+r_{i1}^0}\prod_{j=2}^{w_i}\frac{r_{ij}^1}{1+r_{ij}^1}}{1-(1
    - \beta_i)
    \frac{1}{1+r_{i1}^0}\prod_{j=2}^{w_i}\frac{1}{1+r_{ij}^0}}\] be a
  $\tuple{\sum_{i=1}^dw_i}$-ary function. Let $\Phi(x)=\frac{1}{x}$.
  There exist constants $\alpha<1$ and $c > 0$ depending on $\gamma$
  and $\Delta$ (but not depending on $w_i$) such that

  \[
  \sum_{i=1}^{d} w_i^c \tuple{\sum_{j=1}^{w_i}
    \frac{\Phi(f)}{\Phi(r^0_{ij})}\abs{\frac{\partial
        f(\mathbf{r})}{\partial r^0_{ij}}} + \sum_{j = 2}^{w_i}
    \frac{\Phi(f)}{\Phi(r^1_{ij})}\abs{\frac{\partial
        f(\mathbf{r})}{\partial r^1_{ij}}}} \le \alpha<1
  \]
  for every ${\bf r}=(r_{ij})_{i\in[d],j\in[w_i]}$ where each
  $r_{ij}\in [\lambda\beta_c^{\Delta}, \lambda\beta_c^{-\Delta}]$.

\end{lemma}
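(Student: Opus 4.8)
Since $\Phi(x)=1/x$, each term $\frac{\Phi(f)}{\Phi(r)}\abs{\partial f/\partial r}$ is exactly $\abs{\partial\log f/\partial\log r}$, so the quantity to be bounded is a weighted sum of absolute logarithmic derivatives. Set $N_i\defeq 1-(1-\gamma_i)\frac{r_{i1}^0}{1+r_{i1}^0}\prod_{j=2}^{w_i}\frac{r_{ij}^1}{1+r_{ij}^1}$ and $M_i\defeq 1-(1-\beta_i)\prod_{j=1}^{w_i}\frac{1}{1+r_{ij}^0}$, so that $\log f=\log\lambda+\sum_{i=1}^d\bigl(\log N_i-\log M_i\bigr)$. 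The $i$-th summand involves only the $2w_i-1$ variables attached to $e_i$, so the left-hand side of the lemma equals $\sum_{i=1}^d w_i^c S_i$, where $S_i$ collects the absolute log-derivatives of $\log(N_i/M_i)$ in those variables. The first step is to compute $S_i$ by the chain rule; after the substitution $s_{ij}\defeq\frac{1}{1+r_{ij}^0}$, $u_{ij}\defeq\frac{r_{ij}^1}{1+r_{ij}^1}$, and after bounding the contribution of $r_{i1}^0$ (which occurs in both $N_i$ and $M_i$) by the triangle inequality, $S_i$ becomes at most a sum of two expressions, each of the familiar shape $\frac{(1-\theta)\prod_j x_j}{1-(1-\theta)\prod_j x_j}\sum_j(1-x_j)$ with $\theta\in\{\beta_i,\gamma_i\}$ — the same shape (now under the logarithmic metric rather than the square-root one) that drives the hardcore analysis in Lemma~\ref{lem:derivative}.

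The second step collapses the variables inside each edge. With the product $\prod_j x_j$ held fixed the prefactor $\frac{(1-\theta)\prod_j x_j}{1-(1-\theta)\prod_j x_j}$ is fixed, and $\sum_j(1-x_j)$ is maximized; since $y\mapsto 1-e^{y}$ is concave, Karamata's inequality (applied in the opposite direction from Lemma~\ref{lem:derivative}) shows the maximum occurs when all $x_j$ are equal. Enlarging this common value $x$ grows the prefactor but shrinks $\sum_j(1-x_j)$, so after the reduction the bound on $w_i^cS_i$ becomes a function of the single scalar $x$ ranging over $[\tfrac{1}{1+\lambda\beta_c^{-\Delta}},\tfrac{1}{1+\lambda\beta_c^{\Delta}}]$ (and its analogue for the $u$-block), of the edge size $w_i$, and of $1-\beta_i,1-\gamma_i\in[0,1-\beta_c]$; concretely the dominant piece has the form $w_i^{c+1}(1-\theta)\cdot\frac{x^{w_i}(1-x)}{1-(1-\theta)x^{w_i}}$.

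The third step — and the crux — is the optimization over $w_i$. In the hardcore case $g_{\mathbf z}(\mathbf w)$ was monotone in each $w_i$, so the worst instance was the normal graph $\mathbf w=\mathbf 1$; here the two coupled blocks from $N_i$ and $M_i$ no longer admit that exact monotonicity, and the maximizing $w_i$ is interior, of order $1/\ln(1/x)$, where $w_i^{c+1}x^{w_i}$ takes a value of order $\bigl(\tfrac{c+1}{e\ln(1/x)}\bigr)^{c+1}$. Carrying this through the joint optimization over $x$, $w_i$ and $1-\theta\le1-\beta_c$ — and letting $c\downarrow 0$ — is precisely what forces the constant $2e^{-1/2}$ and the additive $3$ in $\beta_c=1-\frac{2}{2e^{-1/2}\Delta+3}$, and it is the reason the threshold is not the tight normal-graph value $1-2/\Delta$. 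Once each edge is shown to contribute at most about $\tfrac{1}{\Delta}$, summing over the $d\le\Delta$ edges gives $\sum_{i=1}^d w_i^cS_i\le\alpha<1$ for any sufficiently small positive constant $c$; as in Lemma~\ref{lem:derivative}, the admissible range of $c$ drops out of this $w_i$-analysis, and $\alpha$ depends only on $\gamma$, $\beta$ and $\Delta$, not on the $w_i$. The main obstacle throughout is this last step: the loss of exact monotonicity in $w_i$ means the estimates must be carried with enough slack to survive the interior optimum, which is exactly where the computation tree's imperfectness (noted in the introduction) makes itself felt.
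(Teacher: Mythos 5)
Your setup is sound: rewriting each summand as an absolute logarithmic derivative, splitting $\log f$ into $\sum_i(\log N_i-\log M_i)$, bounding the shared variable $r^0_{i1}$ by the triangle inequality, and collapsing the per-edge variables by a Jensen/Karamata step all mirror the paper's opening moves (the paper uses $A_i,B_i$ for your $N_i,M_i$ and substitutes $x_i,y_{ij},z_{ij}$ for $r^0_{i1}/(1+r^0_{i1})$, etc.). But the crux --- your step three --- goes wrong in two linked places, and together they leave you unable to reach the claimed $\beta_c$.

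First, after the triangle inequality you treat the $N_i$-block and the $M_i$-block independently, collapsing each to its own single scalar and maximizing each separately. This throws away the fact that $x_i$ (coming from $r^0_{i1}$) is the \emph{same} variable in both blocks. The paper instead bundles everything into a single three-variable function
\[
V_w(x,y,z)=(1-x)x\bigl(y^{w-1}+(1-z)^{w-1}\bigr)+\beta(w-1)\bigl(xy^{w-1}(1-y)+(1-x)(1-z)^{w-1}z\bigr),
\]
and Lemma \ref{lem:ising-bound-item} maximizes this \emph{jointly}: the optimal configuration is $x=\tfrac12$, $y=y_c=1-\tfrac{1-1/(2\beta)}{w}$, $z=z_c=\tfrac{1-1/(2\beta)}{w}$, with $x\ne y$ in general, and the supremum over $w$ is the monotone limit $\lim_{w\to\infty}\beta\bigl(1-\tfrac{1-1/(2\beta)}{w}\bigr)^w=\beta e^{1/(2\beta)-1}\approx e^{-1/2}$. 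If instead you optimize the $y$-block and the $z$-block separately with their own copy of $x$, each gives roughly $\beta^2e^{1/\beta-2}\approx 1/e$, so two blocks give about $2/e\approx0.74$, strictly larger than $e^{-1/2}\approx 0.61$. Plugging this weaker per-edge bound into $\frac{1-\beta_c}{\beta_c}\cdot d\cdot(\text{per-edge bound})\le1$ yields a strictly more restrictive condition on $\beta_c$ than the one stated in the lemma; your route, even if completed rigorously, cannot recover $\beta_c=1-\tfrac{2}{2e^{-1/2}\Delta+3}$.

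Second, your characterization of the $w$-optimization as an interior maximum ``of order $1/\ln(1/x)$,'' with $w^{c+1}x^w\sim\bigl(\tfrac{c+1}{e\ln(1/x)}\bigr)^{c+1}$, does not match what actually happens. The paper lets the inner maximizers $y_c,z_c$ depend on $w$ and shows $V_w(1/2,y_c,z_c)$ is \emph{increasing} in $w$; the governing constant $\beta e^{1/(2\beta)-1}$ arises from the $w\to\infty$ limit, not from an interior stationary point. The interior max only appears for $w^cV_w$ after $y,z$ are forced to the boundary of the fixed box $[\delta,1-\delta]$, which is a bookkeeping device for choosing $c$ (the $w\le\hat w$ versus $w>\hat w$ split in Lemma \ref{lem:ising-bound-item}) and has no bearing on the value of $\beta_c$. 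So the claim that ``letting $c\downarrow0$ ... forces the constant $2e^{-1/2}$'' is not justified; the constant comes from the joint $V_w$ limit, which your per-block collapse discards.
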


\begin{proof}
  We first let $x_i \defeq \frac{r^0_{i1}}{1 + r^0_{i1}}, y_{ij}
  \defeq \frac{r^1_{ij}}{1 + r^1_{ij}}, z_{ij} \defeq
  \frac{r^0_{ij}}{1 + r^0_{ij}}$ for all $1 \le i \le d$ and $2 \le j
  \le w_i$.  The it holds that $x_i, y_{ij}, z_{ij} \in
  \left[\frac{\lambda\beta_c^ \Delta}{1 + \lambda\beta_c^ \Delta},
    \frac{\lambda}{\lambda + \beta_c^ \Delta}\right]$.  To simplify the
  notation, we denote
  \begin{align*}
    A_i&\defeq 1-(1-\gamma_i)x_i\prod_{j=2}^{w_i}y_{ij}\\
    B_i&\defeq 1-(1-\beta_i)(1-x_i)\prod_{j=2}^{w_i}(1-z_{ij})
  \end{align*}
  where $A_i \in [\gamma_i, 1] \subseteq [\beta_c, 1], B_i \in
  [\beta_i, 1] \subseteq [\beta_c, 1]$.

  Then we can write $f$ as
  \[f = \lambda\prod_{i = 1}^d \frac{A_i}{B_i}.\] We can directly
  compute the partial derivatives of $f$, which yields the following
  for all $i\in[d]$ and $2\le j\le w_i$:
  \begin{align*}
    \frac{\partial f}{\partial r^0_{i1}} &=
    \tuple{\lambda\prod_{\substack{k\in[d]\\k\ne
          i}}\frac{A_k}{B_k}}\cdot
    \frac{(1-x_i)\tuple{(1-A_i)(1-B_i)-(1-x_i)(1-A_i)-x_i(1-B_i)}}{x_iB_i^2}\\
    & = f\cdot\frac{(1-x_i)\tuple{(1-A_i)(1-B_i)-(1-x_i)(1-A_i)-x_i(1-B_i)}}{x_iA_iB_i};\\
    \frac{\partial f}{\partial r^1_{ij}} &=
    \tuple{\lambda\prod_{\substack{k\in[d]\\k\ne
          i}}\frac{A_k}{B_k}}\cdot
    \frac{-(1-y_{ij})^2(1-A_i)}{y_{ij}B_i}= f\cdot\frac{-(1-y_{ij})^2(1-A_i)}{y_{ij}A_i};\\
    \frac{\partial f}{\partial r^0_{ij}} &=
    \tuple{\lambda\prod_{\substack{k\in[d]\\k\ne
          i}}\frac{A_k}{B_k}}\cdot \frac{-A_i(1-B_i)(1-z_{ij})}{B_i^2}
    = f \cdot\frac{-(1-B_i)(1-z_{ij})}{B_i}.
  \end{align*}
  Thus,
  \begin{align*}
    &\quad\;\sum_{i=1}^d w_i^c \tuple{\sum_{j=1}^{w_i}
      \frac{\Phi(f)}{\Phi(r^0_{ij})}\abs{\frac{\partial
          f(\mathbf{r})}{\partial r^0_{ij}}} + \sum_{j = 2}^{w_i} \frac{\Phi(f)}{\Phi(r^1_{ij})}\abs{\frac{\partial f(\mathbf{r})}{\partial r^1_{ij}}}}\\
    &= \frac{1}{f(\mathbf{r})} \sum_{i=1}^d w_i^c
    \tuple{\frac{x_i}{1-x_i}\abs{\frac{\partial
          f(\mathbf{r})}{\partial r^0_{i1}}} + \sum_{j =
        2}^{w_i}\frac{y_i}{1-y_i}\abs{\frac{\partial
          f(\mathbf{r})}{\partial r^1_{ij}}}
      + \sum_{j = 2}^{w_i}\frac{z_i}{1-z_i}\abs{\frac{\partial f(\mathbf{r})}{\partial r^0_{ij}}}}\\
    &= \sum_{i = 1}^d
    w_i^c\left(\frac{-(1-A_i)(1-B_i)+(1-x_i)(1-A_i)+x_i(1-B_i)}{A_iB_i}+\frac{1-A_i}{A_i}\sum_{j
        = 2}^{w_i} (1-y_{ij}) + \frac{1-B_i}{B_i}\sum_{j = 2}^{w_i}
      z_{ij}\right)
  \end{align*}
  Let $y_i\defeq \tuple{\prod_{j = 2}^{w_i} y_{ij}}^{1/(w_i-1)}, z_i
  \defeq 1 - \tuple{\prod_{j = 2}^{w_i} 1 - z_{ij}}^{1/(w_i-1)}$ for
  every $i\in[d]$, then it holds that
  \begin{align*}
    A_i &= 1 - (1 - \gamma_i)x_i y_i^{w_i - 1},\\ %
    B_i &=  1 - (1 - \beta_i)(1-x_i) (1-z_i)^{w_i - 1},
  \end{align*}
  \begin{align*}
    \sum_{j =
      2}^{w_i} (1 - y_{ij}) &\le (w_i - 1)(1 - y_i)\\
    \sum_{j = 2}^{w_i} z_{ij} &\le (w_i - 1)z_i.
  \end{align*}
  Since $A_i, B_i \in [\beta_c, 1]$ for every $i\in[d]$, we have
  \begin{align*}
    &\quad\; \sum_{i = 1}^d
    w_i^c\tuple{\frac{-(1-A_i)(1-B_i)+(1-x_i)(1-A_i)+x_i(1-B_i)}{A_iB_i}+\frac{1-A_i}{A_i}\sum_{j
        = 2}^{w_i} (1-y_{ij}) + \frac{1-B_i}{B_i}\sum_{j = 2}^{w_i}z_{ij}}\\
    &\leq  \sum_{i = 1}^d w_i^c\tuple{\frac{(1-x_i)(1-A_i)+x_i(1-B_i)-(1-A_i)(1-B_i)}{A_iB_i} + (w_i-1)\tuple{\frac{(1 - A_i)(1 - y_i)}{A_i} + \frac{(1 - B_i)z_i}{B_i}}}\\
    &\leq  \sum_{i = 1}^d w_i^c \tuple{\frac{(1-x_i)(1-A_i)+x_i(1-B_i)}{A_iB_i}+(1 - \beta_c)(w_i-1)\tuple{\frac{x_i y_i^{w_i - 1}(1-y_i)}{A_i}+\frac{(1-x_i)(1-z_i)^{w_i - 1}z_i}{B_i}}}\\
    &\leq \frac{1-\beta_c}{\beta_c^2} \sum_{i = 1}^d w_i^c \tuple{(1 -
      x_i)x_i(y_i^{w_i - 1} + (1-z_i)^{w_i - 1}) + \beta_c (w_i -
      1)(x_iy_i^{w_i - 1}(1-y_i)+(1-x_i)(1-z_i)^{w_i -
        1}z_i)}
  \end{align*}

  We now assume Lemma \ref{lem:ising-bound-item}, thus there exist
  constants $\alpha<1,c > 0$ depending on $\lambda$ and $\Delta$ such
  that
  \[
  w_i^c\tuple{(1 - x_i)x_i(y_i^{w_i - 1} + (1-z_i)^{w_i - 1}) +
    \beta_c (w_i - 1)(x_iy_i^{w_i - 1}(1-y_i)+(1-x_i)(1-z_i)^{w_i -
      1}z_i)} \le \alpha\beta_c e^{\frac{1}{2\beta_c}-1}
  \]
  for every $i\in[d]$.
  Thus,
  \begin{align*}
    &\quad\; \sum_{i = 1}^d w_i^c\tuple{\frac{-(1-A_i)(1-B_i)+(1-x_i)(1-A_i)+x_i(1-B_i)}{A_iB_i}+\frac{1-A_i}{A_i}\sum_{j = 2}^{w_i} (1-y_{ij}) + \frac{1-B_i}{B_i}\sum_{j = 2}^{w_i}z_{ij}}\\
    &\le  \frac{1-\beta_c}{\beta_c} \cdot d \cdot \alpha \cdot e^{\beta_c^{-1}/2-1}\\
    &\le \frac{2\alpha d}{2e^{-1/2}d+3}\cdot \frac{2e^{-1/2}d+3}{2e^{-1/2}d+1}\cdot e^{(1 -2/(2e^{-1/2}d+3))^{-1} /2-1}\\
    &= \alpha \cdot \frac{2d}{2e^{-1/2}d+1}\cdot e^{(1 -
      2/(2e^{-1/2}d+3))^{-1}/2-1}\\
    &\le \alpha.
  \end{align*}
  The last inequality is due to the fact that $h(d)\defeq
  \frac{2d}{2e^{-1/2}d+1}\cdot e^{(1 - 2/(2e^{-1/2}d+3))^{-1}/2-1}$ is
  an increasing function on $d$ and $\lim_{d\to\infty}h(d)=1$.

  Combining all above, there exist constants $\alpha < 1$ and $c > 0$
  such that for all $w_1, w_2, \ldots w_d > 0$ and every ${\bf
    r}=(r_{ij})_{i\in[d],j\in[w_i]}$ with each $r_{ij}\in
  [\lambda\beta_c^{\Delta}, \lambda\beta_c^{-\Delta}]$, it holds
  \[
  \sum_{i=1}^{d} w_i^c \tuple{\sum_{j=1}^{w_i}
    \frac{\Phi(f)}{\Phi(r^0_{ij})}\abs{\frac{\partial
        f(\mathbf{r})}{\partial r^0_{ij}}} + \sum_{j = 2}^{w_i}
    \frac{\Phi(f)}{\Phi(r^1_{ij})}\abs{\frac{\partial
        f(\mathbf{r})}{\partial r^1_{ij}}}} \le \alpha<1
  \]
\end{proof}

It remains to prove Lemma \ref{lem:ising-bound-item}:

\begin{lemma}\label{lem:ising-bound-item}
  For all $\beta < 1$ and $0 < \delta < 1/2$, there exist constants
  $\alpha < 1$ and $c > 0$ such that
  \[w^c\tuple{(1 - x)x(y^{w - 1} + (1-z)^{w - 1}) + \beta (w - 1)xy^{w
      - 1}(1-y)+\beta (w - 1)(1-x)(1-z)^{w - 1}z} \le \alpha \beta
  e^{\frac{1}{2\beta}-1}\] for all $w \in \mathbb{N}$ and all $x, y, z
  \in [\delta, 1 - \delta]$.
\end{lemma}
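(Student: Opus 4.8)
The plan is to peel off the polynomial factor $w^c$ and first control the bracketed quantity on its own. Write $M(w)\defeq\max_{x,y,z\in[\delta,1-\delta]}\Theta(x,y,z,w)$, where $\Theta$ denotes the entire left‑hand expression without the $w^c$ factor. The plan is to establish two facts: a \emph{core inequality} $M(w)<\beta e^{1/(2\beta)-1}$ valid for every $w$, together with a \emph{decay estimate} $M(w)\le(1-\delta)^{w-1}\bigl(1+\beta(w-1)\bigr)$ forcing $M(w)\to 0$; and then a short bookkeeping argument that reinstates $w^c$ by taking $c>0$ small.

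For the core inequality, fix $w$ and $x$ and regroup $\Theta$ as a function of $y$ alone plus a function of $z$ alone:
\[
\Theta \;=\; x\,y^{\,w-1}\bigl[(1-x)+\beta(w-1)(1-y)\bigr]\;+\;(1-x)(1-z)^{\,w-1}\bigl[x+\beta(w-1)z\bigr].
\]
Thus for fixed $x$ the maximum over $y,z$ splits into two independent one–variable maximisations. After the substitution $u=1-z$ (which preserves $[\delta,1-\delta]$), both are of the form $\max_{t>0} t^{\,w-1}\bigl(A+\beta(w-1)(1-t)\bigr)$, with $A=1-x$ in the first branch and $A=x$ in the second. A direct computation gives this maximum as $\beta\bigl(1+\tfrac{A-\beta}{\beta w}\bigr)^{w}$, which is $\le \beta e^{(A-\beta)/\beta}=\beta e^{A/\beta-1}$ by the elementary inequality $(1+\tfrac sw)^{w}\le e^{s}$. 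Since the maximum over $t\in[\delta,1-\delta]$ is no larger, we obtain
\[
M(w)\ \le\ \max_{x\in[\delta,1-\delta]}\Bigl(x\,\beta e^{(1-x)/\beta-1}+(1-x)\,\beta e^{x/\beta-1}\Bigr)\ =\ \beta e^{-1}\max_{x\in[\delta,1-\delta]}\Psi(x),\qquad \Psi(x)\defeq x e^{(1-x)/\beta}+(1-x)e^{x/\beta}.
\]

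It then remains to show $\Psi(x)\le\Psi(1/2)=e^{1/(2\beta)}$ on $[\delta,1-\delta]$. Here $\Psi$ is symmetric about $x=1/2$, and one computes that $\Psi'(x)$ has the same sign as $e^{(1-x)/\beta}\bigl(1-\tfrac x\beta\bigr)-e^{x/\beta}\bigl(1-\tfrac{1-x}{\beta}\bigr)$; for $x>1/2$ this is negative as soon as $1-x<\beta$ (the first exponential is the smaller one, and $1-x/\beta$ the smaller linear factor). On $(1/2,1-\delta]$ the condition $1-x<\beta$ holds whenever $\beta\ge 1/2$, so in that regime $\Psi$ is maximised at $x=1/2$, giving $M(w)\le\beta e^{1/(2\beta)-1}$, with the inequality strict for every finite $w$ (the relevant $t$ stays bounded away from $1$, so $(1+\tfrac sw)^{w}<e^{s}$). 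The decay estimate is immediate from $t^{\,w-1}\le(1-\delta)^{w-1}$ and $A+\beta(w-1)(1-t)\le 1+\beta(w-1)$. Since $M(w)\to 0$, the quantity $m\defeq\sup_w M(w)$ is attained at one of finitely many $w$ and hence $m<\beta e^{1/(2\beta)-1}$. For the bookkeeping: because $w^{c}(1-\delta)^{w-1}(1+\beta(w-1))\to 0$ uniformly over $c\le 1$, choose $N$ with $w^{c}M(w)<\tfrac12\beta e^{1/(2\beta)-1}$ for all $w>N$; for the finitely many $w\le N$ use $M(w)\le m$ and pick $c\in(0,1]$ small enough that $N^{c}m<\beta e^{1/(2\beta)-1}$; then $\alpha\defeq\max\bigl\{\tfrac12,\;N^{c}m/(\beta e^{1/(2\beta)-1})\bigr\}<1$ works.

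The step I expect to be the main obstacle is the one–variable inequality $\Psi(x)\le e^{1/(2\beta)}$: it is exactly where the constant $e^{1/(2\beta)-1}$ is produced, it is \emph{sharp} at $x=1/2$, and the sign analysis of $\Psi'$ is precisely what ties the argument to $\beta$ being bounded away from $1/2$ rather than merely $\beta<1$ (at $\beta=1/2$ the bound is already attained at $w=1$, $x=1/2$, so strictness fails — this is harmless here, since the lemma is applied only with $\beta=\beta_c=1-\tfrac{2}{2e^{-1/2}\Delta+3}\ge 1-\tfrac{2}{4e^{-1/2}+3}>1/2$). A secondary subtlety is that the passage from ``strict for each fixed $w$'' to a uniform bound with $w^{c}$ reinstated genuinely needs both the tight bound and the decay estimate together, as above.
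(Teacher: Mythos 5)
Your argument is correct and rests on the same core computation as the paper's: after the same regrouping of $\Theta$, both proofs optimize the two one--variable branches at the interior critical point $t^*=\frac{A+\beta(w-1)}{\beta w}$, obtaining $\beta\bigl(1+\frac{A-\beta}{\beta w}\bigr)^{w}$, and both identify $x=1/2$ as the extremal value of $x$. Where you diverge is in the order of operations and the bookkeeping. You pass to the $w$--free bound via $(1+s/w)^w\le e^{s}$ \emph{before} maximizing over $x$, reducing matters to the clean one--variable inequality $\Psi(x)\le\Psi(1/2)=e^{1/(2\beta)}$ with $\Psi(x)=xe^{(1-x)/\beta}+(1-x)e^{x/\beta}$, which you verify by a sign analysis of $\Psi'$. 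The paper maximizes over $x$ \emph{first}, so the $w$--dependence is still present, obtaining $\beta\bigl(1-\frac{1-1/(2\beta)}{w}\bigr)^{w}$; it then invokes the monotonicity of $(1+a/w)^{w}$ in $w$ to compare against its value at an explicit threshold $\hat w$, and defines $\alpha$, $c_1$, $c_2$ directly in terms of $\hat w$. Your bookkeeping for reinstating the $w^{c}$ factor is correspondingly more abstract and arguably more transparent: strict inequality at every fixed $w$, combined with the decay estimate $M(w)\to 0$, gives $\sup_w M(w)<\beta e^{1/(2\beta)-1}$ outright, after which any sufficiently small $c>0$ works; the paper instead verifies the polynomial--times--exponential quantity is decreasing past $\hat w$ and sets $c=\min\{c_1,c_2\}$. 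Finally, your remark about $\beta>1/2$ is accurate and worth stressing: with the bare hypothesis $\beta<1$ the lemma is actually false at $\beta=1/2$ (take $w=1$, $x=1/2$: the left side is $1/2$ while the right side is $\alpha/2$), so $\beta>1/2$ is a necessary tacit assumption---one that is satisfied by the only value $\beta_c$ to which the lemma is applied, since $\beta_c\ge 1-\frac{2}{4e^{-1/2}+3}>1/2$ for all $\Delta\ge 2$.
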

\begin{proof}
  For a fixed $w$, let
  \[
  V_w(x, y, z)\defeq (1 - x)x(y^{w - 1} + (1-z)^{w - 1}) + \beta (w -
  1)xy^{w - 1}(1-y)+\beta (w - 1)(1-x)(1-z)^{w - 1}z.\] Then $V_w(x,
  y, z)$ achieves maximum value when
  \begin{align*}
    y & = y_c \defeq \frac{\beta (w - 1) + 1 - x}{\beta w} = 1 - \frac{1 - (1 - x)/\beta}{w},\\
    z & = z_c \defeq 1 - \frac{\beta (w - 1) + x}{\beta w} = \frac{1 -
      x/\beta}{w}.
  \end{align*}
  If $y < 1 - \frac{1 - (1 - x)/\beta}{w}$, $V_w(x, y, z)$ is an
  increasing function on $y$. If $z > \frac{1 - x/\beta}{w}$, $V_w(x,
  y, z)$ is a decreasing function on $z$.

  Let $\hat w \defeq \max\set{\left\lceil\frac{1}{\delta}\right\rceil,
    \left\lceil-\frac{1}{\ln(1-\delta)}\right\rceil}+1$. If $w>\hat
  w$, the monotonicity of $V_w(x,y,z)$ implies that the function
  achieves it maximum when the $y$ and $z$ are at the boundary, i.e.,
  $y=1-z=1-\delta$, for every fixed $x$.

  We now analyze the case $w\le \hat w$ and the case $w>\hat w$
  separately. Define $\alpha \defeq
  \tuple{\frac{\tuple{1+\frac{1/(2\beta) - 1}{\hat w}}^{\hat
        w}}{e^{1/(2\beta)-1}}}^{1/2}$. It is easy to verify that $\alpha < 1$ since $(1+x/n)^n < e^x$ for every $x>0$.
  \begin{itemize}
  \item (If $w \le \hat w$.) We plug $y=y_c$ and $z=z_c$ into
    $V_w(x,y,z)$:
    \[V_w(x, y, z) \le (1 - x)x(y_c^{w - 1} + (1-z_c)^{w - 1}) + \beta
    (w - 1)xy_c^{w - 1}(1-y_c)+\beta (w - 1)(1 - x) (1-z_c)^{w -
      1}z_c.\] The derivative of the right hand side with respect to
    $x$ is
    \[\frac{(\beta (w - 1)+1-x)^{w - 1}(1 + \beta (w - 1) - (w + 1)x)
      - (\beta (w - 1) + x)^{w - 1}(1 + \beta (w - 1) - (w +
      1)(1-x))}{(\beta w)^w}.\]
    It is easy to see that the function above is zero when $x=1/2$ and
    positive for smaller $x$, negative for larger $x$.

    Thus, when $x = 1/2, y = y_c, z = z_c$, $V_w(x, y, z)$ achieves
    its maximum value
    \[V_w\tuple{\frac{1}{2}, 1 - \frac{1 - \frac{1}{2\beta}}{w},
      \frac{1 - \frac{1}{2\beta}}{w}} =
    \beta\left(1-\frac{1-\frac{1}{2\beta}}{w}\right)^{w} \le
    \beta\left(1-\frac{1-\frac{1}{2\beta}}{\hat w}\right)^{\hat w} <
    \beta e^{\frac{1}{2\beta}-1}.\]%
    Let $c_1 \defeq \log_{\hat w} \frac{\alpha
      e^{\frac{1}{2\beta}-1}}{\tuple{1-\frac{1-\frac{1}{2\beta}}{\hat
          w}}^{\hat w}} = -\log_{\hat w} \alpha > 0$, then for all $w
    \le \hat w$ and $c' \in (0, c_1]$,
    \[w^{c'}V_w(x, y, z) \le w^{c_1}V_w(x, y, z) \le \hat
    w^{c_1}\beta\left(1-\frac{1-\frac{1}{2\beta}}{\hat w}\right)^{\hat
      w} = \frac{\beta}{\alpha}\tuple{1-\frac{1-\frac{1}{2\beta}}{\hat
        w}}^{\hat w} = \alpha\beta e^{\frac{1}{2\beta}-1}.\]
  \item (If $w > \hat w$.) Let $c_2\defeq -\hat w \log(1 - \delta) -
    1> 0$. Then
    \[
    V_w(x, y, z) \le V_w(x, 1 - \delta, \delta) = 2x(1-x)(1-\delta)^{w
      - 1} + \beta(w - 1)\delta(1-\delta)^{w -
      1}\]
    The above achieves its maximum when $x=1/2$, i.e., $V_w(x,y,z)\le
    V_w\tuple{\frac{1}{2},1-\delta,\delta}$. Therefore
    \[
    w^{c_2}\cdot V_w(x,y,z)\le w^{c_2}\cdot
    V_w\tuple{\frac{1}{2},1-\delta,\delta}=w^{c_2}\cdot \frac{1+2\beta
      (w - 1)\delta}{2}(1-\delta)^{w - 1}.
    \]

    We now prove that for all $c' \in (0, c_2]$, $g(w) \defeq
    w^{c'}\frac{1+2\beta(w-1)\delta}{2}(1-\delta)^{w-1}$ is decreasing
    on $w$ when $w > \hat w$.  Since $c' \le c_2 = -\hat w \log(1 -
    \delta) - 1 \le -(w - 1) \log(1 - \delta) - 1$ for all $w > \hat
    w$,
    \begin{align*}
      \frac{\partial g}{\partial w} &= \frac{w^{c'-1}(1-\delta)^{w-1}}{2} \tuple{c'+w\log(1-\delta)}+ \beta\delta w^{c'-1}(1-\delta)^{w-1}\tuple{w+(w-1)(c'+w\log(1-\delta))}\\
      &\le \frac{w^{c'-1}(1-\delta)^{w-1}}{2}\tuple{\log(1-\delta) - 1} + \beta\delta w^{c'-1}(1-\delta)^{w-1}\tuple{w+(w-1)(\log(1-\delta)-1)}\\
      &\le -\frac{w^{c'-1}(1-\delta)^{w-1}}{2} + \beta\delta w^{c'-1}(1-\delta)^{w-1}(1 + (w-1)\log(1-\delta))\\
      &\le -\frac{w^{c'-1}(1-\delta)^{w-1}}{2} - \beta\delta w^{c'-1}(1-\delta)^{w-1}c')\\
      &< 0
    \end{align*}
    Thus, $g(w)$ is decreasing on $w$ when $w > \hat w$.
    In light of this, $w^{c'}\cdot V_w(x,y,z)$ can be upper bounded by
    $\hat w^{c'}\cdot V_{\hat w}\tuple{\frac{1}{2},1-\delta,\delta}$
    for all $c' \in (0, c_2]$.

    In all, we have
    \begin{align*}
      w^{c'}\cdot V_w(x, y, z) & \le w^{c'}\cdot V_w\tuple{\frac{1}{2}, 1 - \delta, \delta} \\
      & \le \hat w^{c'}\cdot V_{\hat w}\tuple{\frac{1}{2}, 1 - \delta, \delta}\\
      & \le \hat w^{c_1}\cdot V_{\hat w}\tuple{\frac{1}{2}, 1 - \frac{1 - \frac{1}{2\beta}}{\hat w}, \frac{1 - \frac{1}{2\beta}}{\hat w}} \hat w^{c' - c_1} \\
      & = \alpha \beta e^{\frac{1}{2\beta}-1} \hat w^{c'-c_1}
    \end{align*}
    for all $w > \hat w$ and $c' \in (0, c_2]$.
  \end{itemize}
  Let $c = \min\{c_1, c_2\}$, then we have $w^cV_w(x,y,z) \le
  w^{c_1}V_w(x, y, z) \le \alpha\beta e^{\frac{1}{2\beta}-1}$ if $w
  \le \hat w$ and $w^cV_w(x,y,z) \le \hat w^{c_1}V_{\hat w}(x, y,
  z)\hat w^{c'-c_1} \le \alpha\beta e^{\frac{1}{2\beta}-1}$ if $w >
  \hat w$, i.e.,
  \[w^c\tuple{(1 - x)x(y^{w - 1} + (1-z)^{w - 1}) + \beta (w - 1)xy^{w
      - 1}(1-y)+\beta (w - 1)(1-x)(1-z)^{w - 1}z} \le \alpha \beta
  e^{\frac{1}{2\beta}-1}\] for all $w \in \mathbb{N}$ and all $x, y, z
  \in [\delta, 1 - \delta]$.

\end{proof}

We are now ready to prove Lemma \ref{lem:ising-derivative}.

\begin{proof}[Proof of Lemma \ref{lem:ising-derivative}]
  Let $\Phi(x)=\frac{1}{x}$ and $\phi(x)=\int\Phi(x)\,\mathrm{d}x=\ln
  x$. We apply induction on $\ell\defeq \max\set{0,L}$ to show that,
  if $d\le\Delta$, then $\abs{\phi(f_v)-\phi(f(G,v,L))}\le
  4\sqrt{e}\alpha^L$ for some constant $\alpha<1$.

  If $\ell=0$, note that $f_v\in[\lambda\beta_c^ \Delta,
  \lambda\beta_c^{-\Delta}]$, thus $\abs{\phi(f(G,v,L))-\phi(f_v)} \le
  -2\Delta\ln \beta_c$. Since $\abs{\ln(1-x)} \le 2x$ for all
  $x\in\tuple{0,\frac{1}{2}}$,
  \[\abs{\phi(f(G,v,L))-\phi(f_v)} \le -2\Delta\ln \beta_c \le
  4\Delta\frac{2}{\frac{2}{\sqrt{e}}\Delta+3} \le 4\sqrt{e}.\]

  We now assume $L=\ell>0$ and the lemma holds for smaller $\ell$. For
  every $i\in[d]$ and $j\in [w_i]$, we denote $r^0_{ij}=R^0_{ij},
  r^1_{ij}= R^1_{ij}$ and $\hat r^0_{ij} = R(G^0_{ij},v_{ij},L-\lfloor
  1+c\log_{1/\alpha} w_i\rfloor), \hat r^1_{ij} =
  R(G^1_{ij},v_{ij},L-\lfloor 1+c\log_{1/\alpha} w_i\rfloor)$. Let
  $\mathbf{r}=((r^0_{ij})_{1\le j \le w_i}, (r^1_{ij})_{2\le j \le
    w_i})_{i \in [d]}$, $\mathbf{\hat r}=((\hat r^0_{ij})_{1\le j \le
    w_i}, (\hat r^1_{ij})_{2\le j \le w_i})_{i \in [d]}$. Let
  $f(\mathbf{r})=\lambda\prod_{i=1}^{d} \frac{1-(1 -
    \gamma_i)\frac{r_{i1}^0}{1+r_{i1}^0}\prod_{j=2}^{w_i}\frac{r_{ij}^1}{1+r_{ij}^1}}{1-(1
    - \beta_i)
    \frac{1}{1+r_{i1}^0}\prod_{j=2}^{w_i}\frac{1}{1+r_{ij}^0}}$, then
  it follows from Proposition \ref{prop:tech} that for some
  $\mathbf{\tilde r}=((\tilde r^0_{ij})_{1\le j \le w_i}, (\tilde
  r^1_{ij})_{2\le j \le w_i})_{i \in [d]}$ where each $\tilde
  r^0_{ij}, \tilde r^1_{ij}\in[\lambda \beta_c^ \Delta, \lambda
  \beta_c^{-\Delta}]$,
  \begin{align*}
    \abs{\phi(R_v)-\phi(R(G,v,L))}
    &\le\sum_{i=1}^d\tuple{\sum_{j=1}^{w_i}
      \frac{\Phi(f)}{\Phi(r^0_{ij})}\abs{\frac{\partial
          f(\mathbf{r})}{\partial r^0_{ij}}}\cdot\abs{\phi(r^0_{ij})-\phi(\hat r^0_{ij})} + \sum_{j = 2}^{w_i} \frac{\Phi(f)}{\Phi(r^1_{ij})}\abs{\frac{\partial f(\mathbf{r})}{\partial r^1_{ij}}}\cdot\abs{\phi(r^1_{ij})-\phi(\hat r^1_{ij})}} \\
    &\overset{(\spadesuit)}{\le}
    4\sqrt{e}\sum_{i=1}^d\tuple{\sum_{j=1}^{w_i}
      \frac{\Phi(f)}{\Phi(r^0_{ij})}\abs{\frac{\partial
          f(\mathbf{r})}{\partial r^0_{ij}}} + \sum_{j = 2}^{w_i} \frac{\Phi(f)}{\Phi(r^1_{ij})}\abs{\frac{\partial f(\mathbf{r})}{\partial r^1_{ij}}}}\alpha^{L-\lfloor 1+c\log_{1/\alpha} w_i\rfloor}\\
    &\overset{(\heartsuit)}{\le} 4\sqrt{e}\alpha^L.
  \end{align*}
  $(\spadesuit)$ follows from the induction hypothesis and
  $(\heartsuit)$ is due to Lemma \ref{lem:ising-decay} for all $d \le
  \Delta$.

  Then the lemma follows from Proposition \ref{prop:tech}, since
  \begin{align*}
    \abs{R_v-R(G,v,L)}
    &=\frac{1}{\Phi(\tilde x)}\cdot\abs{\phi(R_v)-\phi(R(G,v,L))}&\mbox{for some $\tilde x\in[\lambda \beta_c^ \Delta, \lambda\beta_c^{-\Delta}]$}\\
    &\le e^{\sqrt{e}}\lambda^{-1}\cdot 4\sqrt{e}\alpha^L\\
    &= 4\lambda^{-1}e^{\frac{1}{2} + \sqrt{e}}\alpha^L.
  \end{align*}
  The inequality above is due to $\beta_c^{-\Delta} = \tuple{1 -
    \frac{\sqrt{e}}{\Delta + \frac{3}{2}\sqrt{e}}}^{-\Delta} \le
  e^{\sqrt{e}}.$
\end{proof}

\end{document}